\documentclass[runningheads]{llncs}
\usepackage{hyperref}
\usepackage{breakurl}

\usepackage{graphicx}
\usepackage{pstricks}
\usepackage{amsmath}
\usepackage{amssymb}
\usepackage{mathrsfs}
\usepackage{tikz}
\usepackage{float}

\def\text#1{\textrm{#1}}

\def\precond#1{{\vphantom{#1}}^\bullet #1}
\def\postcond#1{{#1}^\bullet}

\def\production#1{\stackrel{#1}{\longrightarrow}}

\newfont{\fsc}{eusm10 scaled 1100}      
\def\powermultiset#1{\bbbn^{#1}}
\def\implies{\Rightarrow}

\def\mathrlap{\mathpalette\mathrlapinternal}
\def\mathrlapinternal#1#2{%
  \rlap{$\mathsurround=0pt#1{#2}$}}
\def\mathllap{\mathpalette\mathllapinternal}
\def\mathllapinternal#1#2{%
  \llap{$\mathsurround=0pt#1{#2}$}}

\def\epsilon{\varepsilon}

\def\defitem#1{\emph{#1}}

\def\AA{\text{\it AA}}

\let\origexists\exists
\let\orignexists\nexists
\let\origforall\forall
\def\quantorSpace{}
\def\exists#1.{\quantorSpace\origexists\def\quantorSpace{\,}#1.\onespace}
\def\nexists#1.{\quantorSpace\orignexists\def\quantorSpace{\,}#1.\onespace}
\def\forall#1.{\quantorSpace\origforall\def\quantorSpace{\,}#1.\onespace}
\def\onespace#1{\let\argument=#1\ifx\onespace#1\else~\fi\argument}

\let\origmin\min
\def\min{\mathord{\origmin}}
\let\origmax\max
\def\max{\mathord{\origmax}}

\def\quireunderscore{_}
\def\quire#1{%
  \def\tmp{#1}%
  \ifx\tmp\quireunderscore%
    \def\tmp{\quireindexed_}
  \else%
    \def\tmp{\mathscr{Q}#1}
  \fi\tmp}
\def\quireindexed_#1{\mathscr{Q}_{\text{#1}}}

\newtheorem{observation}{Observation}
\def\observationname{Observation}

\newcommand{\refdf}[1]{\definitionname~\ref{df-#1}}
\newcommand{\refpr}[1]{\propositionname~\ref{pr-#1}}
\newcommand{\refthm}[1]{\theoremname~\ref{thm-#1}}
\newcommand{\refcor}[1]{\corollaryname~\ref{cor-#1}}
\newcommand{\reflem}[1]{\lemmaname~\ref{lem-#1}}
\newcommand{\reffig}[1]{\figurename~\ref{fig-#1}}

\newcommand{\refobs}[1]{\observationname~\ref{obs-#1}}
\newcommand{\refsec}[1]{Section~\ref{sec-#1}}

\makeatletter
\def\goesto{\@transition\rightarrowfill}
\def\Goesto{\@transition\Rightarrowfill}
\def\ngoesto{\@transition\nrightarrowfill}
\def\nGoesto{\@transition\nRightarrowfill}
\def\@transition#1{\@ifnextchar[{\@@transition{#1}}{\@@transition{#1}[]}}
\newbox\@transbox
\newbox\@arrowbox
\def\rightarrowfill{$\m@th\mathord-\mkern-6mu%
  \cleaders\hbox{$\mkern-2mu\mathord-\mkern-2mu$}\hfill
  \mkern-6mu\mathord\rightarrow$}
\def\Rightarrowfill{$\m@th\mathord=\mkern-6mu%
  \cleaders\hbox{$\mkern-2mu\mathord=\mkern-2mu$}\hfill
  \mkern-6mu\mathord\Rightarrow$}
\def\@@transition#1[#2]%
   {\setbox\@transbox\hbox
       {\vrule height 1.5ex depth 1ex width 0ex\hskip0.25em$\scriptstyle#2$\hskip0.25em}
   \ifdim\wd\@transbox<1.5em
      \setbox\@transbox\hbox to 1.5em{\hfil\box\@transbox\hfil}\fi
   \setbox\@arrowbox\hbox to \wd\@transbox{#1}
   \ht\@arrowbox\z@\dp\@arrowbox\z@
   \setbox\@transbox\hbox{$\mathop{\box\@arrowbox}\limits^{\box\@transbox}$}
   \ht\@transbox\z@\dp\@transbox\z@
   \mathrel{\box\@transbox}}
\makeatother

\def\alignedcaption[#1&#2]{\mbox{\scriptsize $\mathllap{#1{}}\mathrlap{#2}$}}

\def\ie{i.e.\ }

\def\restrictedto{\mathop\upharpoonright}

\newcommand{\dcup}{\stackrel{\mbox{\huge .}}{\cup}}   
\newcommand{\plat}[1]{\raisebox{0pt}[0pt][0pt]{#1}}   
\newcommand{\inp}{\mathbin\in}                        
\def\idx#1#2#3#4#5{
  \def\argone{#1}
  \def\argtwo{#2}
  \def\argthree{#3}
  \def\argfour{#4}
  \def\argfive{#5}
  \def\testprime{'}
  \def\testdprime{''}
  \def\testtprime{'''}
  {\vphantom{\argthree}}_%
    {\vphantom{\argfour}\argone}^%
    {\vphantom{\argfive}{\argtwo}}%
  \argthree_%
    {\vphantom{\argone}\argfour}%
    \ifx\argfive\testprime\argfive\else%
    \ifx\argfive\testdprime\argfive\else%
    \ifx\argfive\testtprime\argfive\else%
    ^{\vphantom{\argtwo}\argfive}\fi\fi\fi%
}



\newcommand{\monus}{\mathrel{\raisebox{-0pt}[0pt][0pt]{$
                      \stackrel{\raisebox{-5pt}[0pt][0pt]{\huge$\cdot$}}
                               {\raisebox{0pt}[0pt][0pt]{$-$}}$}}}

\def\swap{\mbox{swap}}
\def\swapeq{\equiv_1}
\def\swapeqi{\equiv_{1}}

\def\boldbracketa{%
  \psline[linewidth=0.1pt]{-}(0.15,-0.05)(0.05,-0.05)%
  \psline[linewidth=1.5pt]{-}(0.05,-0.05)(0.05,0.25)%
  \psline[linewidth=0.1pt]{-}(0.05,0.25)(0.15,0.25)%
  \phantom{[\,}%
}
\def\boldbracketb{%
  \psline[linewidth=0.1pt]{-}(0.0,-0.05)(0.1,-0.05)%
  \psline[linewidth=1.5pt]{-}(0.1,-0.05)(0.1,0.25)%
  \psline[linewidth=0.1pt]{-}(0.1,0.25)(0.0,0.25)%
  \phantom{]\,}%
}
\def\BD#1{\mathop{\boldbracketa#1\boldbracketb}}
\def\BDinf#1{{\boldbracketa#1\boldbracketb_{\psscalebox{0.8}{\scriptscriptstyle\infty}}}}
\def\BDify#1{{\it BD}(#1)}
\newenvironment{itemise}{\begin{list}{$\bullet$}{\leftmargin 12pt \labelwidth\leftmargin\advance\labelwidth-\labelsep \topsep 4pt \itemsep 2pt \parsep 2pt}}{\end{list}}
\newenvironment{itemisei}{\begin{list}{$-$}{\leftmargin 12pt \labelwidth\leftmargin\advance\labelwidth-\labelsep \topsep 4pt \itemsep 2pt \parsep 2pt}}{\end{list}}
\def\justempty{}
\newenvironment{define}[1]{\begin{definition}\rm\def\arga{#1}\ifx\justempty\arga~\else#1\fi\\\vspace{-6ex}\\\mbox{~}\begin{itemise}}{\end{itemise}\end{definition}}

\DeclareFontFamily{T1}{la}{}
\DeclareFontShape{T1}{la}{m}{n}{<->s*[0.8571428571]la14}{}

\def\processfont#1{\text{\fsc #1}}
\def\NN{\processfont{N}}
\def\SS{\processfont{S}}
\def\TT{\processfont{T}}
\def\FF{\processfont{F}}
\def\MM{\processfont{M}}
\def\PP{P}
\def\QQ{Q}

\newcommand{\GR}{{\rm GR}(N)}
\newcommand{\fGR}{{\rm GR}_{\it fin}(N)}

\makeatletter

\makeatother

\makeatletter
\headerps@out{%
[%
  /BBox[0 0 1 1]%
  /_objdef{pdfRedHighlight} %
/BP pdfmark %
gsave 1 0 0 setrgbcolor 0.7 .setopacityalpha /Multiply .setblendmode 0.1 setlinewidth 0 0 moveto 1 0 lineto stroke grestore
[/EP pdfmark%
[%
  /BBox[0 0 1 1]%
  /_objdef{pdfGreenHighlight} %
/BP pdfmark %
gsave 0 1 0 setrgbcolor 0.7 .setopacityalpha /Multiply .setblendmode 0.1 setlinewidth 0 0 moveto 1 0 lineto stroke grestore
[/EP pdfmark%
[%
  /BBox[0 0 1 1]%
  /_objdef{pdfCyanHighlight} %
/BP pdfmark %
gsave 0 1 1 setrgbcolor 0.7 .setopacityalpha /Multiply .setblendmode 0.1 setlinewidth 0 0 moveto 1 0 lineto stroke grestore
[/EP pdfmark%
}
\def\@linkborderhighlight{pdfRedHighlight}
\def\@citeborderhighlight{pdfGreenHighlight}
\def\@urlborderhighlight{pdfCyanHighlight}
\define@key{PDF}{AP}{\pdf@addtoks{#1}{AP}}
\def\hyper@linkend{%
  \literalps@out{\strip@pt@and@otherjunk\baselineskip\space H.L}%
  \edef\Hy@tempcolor{%
    \csname @\hyper@currentlinktype borderhighlight\endcsname
  }%
  \pdfmark{%
    pdfmark=/ANN,%
    linktype=link,%
    Subtype=/Link,%
    PDFAFlags=4,%
    Dest=\hyper@currentanchor,%
    AP={%
      <</N {\Hy@tempcolor}>>%
    },%
    Border=0 0 0,%
    Raw=H.B%
  }%
  \Hy@endcolorlink
  \ifHy@breaklinks
  \else
    \Hy@LinkMath
    \Hy@SaveSpaceFactor
    \egroup
    \Hy@RestoreSpaceFactor
  \fi
}%
\def\hyper@link#1#2#3{%
  \Hy@VerboseLinkStart{#1}{#2}%
  \edef\Hy@tempcolor{\csname @#1borderhighlight\endcsname}%
  \begingroup
    \protected@edef\Hy@testname{#2}%
    \ifx\Hy@testname\@empty
      \Hy@Warning{%
        Empty destination name,\MessageBreak
        using `\Hy@undefinedname'%
      }%
      \let\Hy@testname\Hy@undefinedname
    \fi
    \pdfmark[{#3}]{%
      linktype={#1},%
      Border=0 0 0,%
      pdfmark=/ANN,%
      Subtype=/Link,%
      PDFAFlags=4,%
      AP={%
        <</N {\Hy@tempcolor}>>%
      },%
      Dest=\Hy@testname
    }%
  \endgroup
}
\def\hyper@linkurl#1#2{%
  \begingroup
    \Hy@pstringdef\Hy@pstringURI{#2}%
    \hyper@chars
    \leavevmode
    \pdfmark[{#1}]{%
      pdfmark=/ANN,%
      linktype=url,%
      Border=0 0 0,%
      Action={<<%
        /Subtype/URI%
        /URI(\Hy@pstringURI)%
        \ifHy@href@ismap
          /IsMap true%
        \fi
      >>},%
      Subtype=/Link,%
      AP={%
        <</N {\@urlborderhighlight}>>%
      },%
      PDFAFlags=4%
    }%
  \endgroup
}
\makeatother

\usepackage{pstricks}
\usepackage{pst-node}
\usepackage{graphicx}

\newcounter{netimage}
\def\p#1:#2;{\cnode #1{0.3}{n\thenetimage-#2}}
\def\P#1:#2;{\p #1:#2;\pscircle*#1{0.1}}
\def\q#1:#2:#3;{\p #1:#2;\rput#1{\rput[l](0.45,0){\large\it #3}}}
\def\Q#1:#2:#3;{\P #1:#2;\rput#1{\rput[l](0.45,0){\large\it #3}}}
\def\qq#1:#2:#3;{\p #1:#2;\rput#1{\rput[t](0,-0.5){\large\it #3}}}
\def\ql#1:#2:#3;{\p #1:#2;\rput#1{\rput[r](-0.45,0){\large\it #3}}}
\def\qt#1:#2:#3;{\p #1:#2;\rput#1{\rput[b](0,0.45){\large\it #3}}}
\def\Qt#1:#2:#3;{\P #1:#2;\rput#1{\rput[b](0,0.45){\large\it #3}}}
\def\Ql#1:#2:#3;{\P #1:#2;\rput#1{\rput[r](-0.45,0){\large\it #3}}}
\def\qx#1:#2:#3:#4;{\p #1:#2;\rput#1{\rput#4{\large\it #3}}}
\def\QXX#1:#2:#3:#4:#5;{\p #1:#2;\rput#1{\rput#4{\large\it #3}}\pscircle*#5{0.1}}
\def\s#1:#2:#3;{\p #1:#2;\rput#1{\rput(-0.03,0){\large\it #3}}}
\def\t#1:#2:#3;{\rput#1{\rnode{n\thenetimage-#2}{\psframebox{%
  \vbox to 0.6cm{\vfil\hbox to 0.6cm{\hfil\Large\it #3\hfil}\vfil}}}}}
\def\u#1:#2:#3:#4;{\rput#1{\rnode{n\thenetimage-#2}{\psframebox{%
  \vbox to 0.6cm{\vfil\hbox to 0.6cm{\hfil\Large\it #3\hfil}\vfil}}}}%
  \rput#1{\rput[l](0.6,0){\large\it #4}}}
\def\ut#1:#2:#3:#4;{\rput#1{\rnode{n\thenetimage-#2}{\psframebox{%
  \vbox to 0.6cm{\vfil\hbox to 0.6cm{\hfil\Large\it #3\hfil}\vfil}}}}%
  \rput#1{\rput[b](0,0.6){\large\it #4}}}
\def\ul#1:#2:#3:#4;{\rput#1{\rnode{n\thenetimage-#2}{\psframebox{%
  \vbox to 0.6cm{\vfil\hbox to 0.6cm{\hfil\Large\it #3\hfil}\vfil}}}}%
  \rput#1{\rput[r](-0.6,0){\large\it #4}}}
\def\a#1->#2;{\ncline{->}{n\thenetimage-#1}{n\thenetimage-#2}}
\def\A#1->#2;{\ncarc[arcangle=16]{->}{n\thenetimage-#1}{n\thenetimage-#2}}
\def\AA#1->#2;{\ncarc[arcangle=32]{->}{n\thenetimage-#1}{n\thenetimage-#2}}
\def\B#1->#2;{\ncarc[arcangle=-8]{->}{n\thenetimage-#1}{n\thenetimage-#2}}
\def\avlinearc{0.2}
\def\av#1[#2]-#3->[#4]#5;{
  \SpecialCoor
  \psline[linearc=\avlinearc]{->}([angle=#2]n\thenetimage-#1)#3([angle=#4]n\thenetimage-#5)
}

\makeatletter
\long\def\petrinet(#1)#2\end{\psscalebox{0.7}{\pspicture(#1)\stepcounter{netimage}#2\endpspicture}\end}

\makeatother

\psset{arrowsize=5pt}

\hfuzz .3pt
\newcommand{\AND}{and}

\titlerunning{Abstract Processes and Conflicts in~Place/Transition Systems}
\title{Abstract Processes and Conflicts in~Place/Transition Systems%
\texorpdfstring{\thanks{This work was
partially supported by the DFG (German Research Foundation).}}{}}
\authorrunning{R.J. van Glabbeek, U. Goltz \& J.-W. Schicke-Uffmann}

\author{
  Rob van Glabbeek \inst{1,2} \and
  Ursula Goltz \inst{3} \and
  Jens-Wolfhard Schicke-Uffmann \inst{3}
}
\institute{
  Data61, CSIRO, Sydney, Australia
  \and
  School of Comp.\ Sc.\ and Engineering, Univ.\ of New South Wales,
  Sydney, Australia
  \and
  Institute for Programming and Reactive Systems, TU Braunschweig, Germany
}

\begin{document}

\maketitle
\setcounter{footnote}{0}

\begin{abstract}
For one-safe Petri nets or condition/event-systems, a \defitem{process} as
defined by Carl Adam Petri provides a notion of a run of a system where causal
dependencies are reflected in terms of a partial order.
Goltz and Reisig have generalised this concept for nets where places carry
multiple tokens, by distinguishing tokens according to their causal history.
However, this so-called \defitem{individual token interpretation} is often
considered too detailed.
Here we identify a subclass of Petri nets, called \defitem{structural conflict nets},
where no interplay between conflict and concurrency due to token multiplicity occurs.
For this subclass, we define abstract processes as equivalence classes of
Goltz-Reisig processes.  We justify this approach by showing that
there is a largest abstract process if and only if the underlying net is
conflict-free with respect to a canonical notion of conflict.
\end{abstract}

\section{Introduction}\label{sec-intro}
\noindent
In this paper we address a well-known problem in Petri net theory,
namely how to generalise Petri's concept of non-sequential processes
to nets where places may carry multiple tokens. We propose and justify
a solution for a subclass of Petri nets, called \defitem{structural conflict nets}.

One of the most interesting features of Petri nets is that they allow
the explicit representation of causal dependencies between action
occurrences when modelling reactive systems.
Petri defined \defitem{condition/event systems}, where ---
amongst other restrictions --- places (there called conditions) may
carry at most one token. For this class of nets, he proposed what is
now the classical notion of a \defitem{process}, given as a mapping
from an \defitem{occurrence net} (acyclic net with unbranched places)
to the original net \cite{petri77nonsequential,genrich80dictionary}.
A process models a run of the represented system, obtained by choosing
one of the alternatives in case of conflict. It records all
occurrences of the transitions and places visited during such a run,
together with the causal dependencies between them, which are given by
the flow relation of the net.

However, the most frequently used class of Petri nets are nets where places
may carry arbitrarily many tokens, or a certain maximal number of tokens when
adding place capacities. This type of nets is often called
\defitem{place/transition systems} (\mbox{P\hspace{-1pt}/T} systems). Here
tokens are usually assumed to be indistinguishable entities, for example
representing a number of available resources in a system. Unfortunately, it is
not straightforward to generalise the notion of process, as defined by Petri
for condition/event systems, to \mbox{P\hspace{-1pt}/T} systems. In fact, it
has now for more than 30 years been a well-known problem in Petri net theory
how to formalise an appropriate causality-based concept of process or run for
general \mbox{P\hspace{-1pt}/T} systems. In the following we give an
introduction to the problem and a short overview on existing approaches.

As a first approach, Goltz {\AND} Reisig generalised Petri's notion of process to
general \mbox{P\hspace{-1pt}/T} systems \cite{goltz83nonsequential}. We call
this notion of a process \defitem{GR-process}. It is based on a
canonical unfolding of a P\hspace{-1pt}/T system into a condition/event system,
representing places that may carry several tokens by a corresponding number
of conditions (see \cite{goltz87representations}).  \reffig{unsafe} shows a
\mbox{P\hspace{-1pt}/T} system with two of its GR-processes.

\begin{figure}
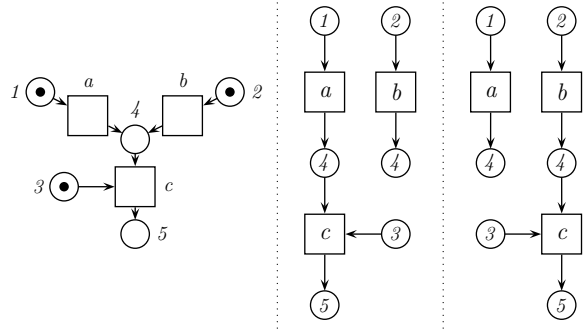

  \begin{center}
    \psscalebox{0.9}{
    \begin{petrinet}(12,7)
      \Ql(0.5,5):p:1;
      \Q(4.5,5):q:2;
      \ut(1.5,4.5):a::a;
      \ut(3.5,4.5):b::b;
      \qt(2.5,4):r:4;
      \Ql(1,3):cs:3;
      \u(2.5,3):c::c;
      \q(2.5,2):s:5;

      \a p->a; \a q->b; \a a->r; \a b->r;
      \a r->c; \a cs->c; \a c->s;

      \psline[linestyle=dotted](5.5,0)(5.5,7)

      \s(6.5,6.5):p2:1;
      \s(8.0,6.5):q2:2;
      \t(6.5,5):a2:a;
      \t(8.0,5):b2:b;
      \s(6.5,3.5):r2:4;
      \s(8.0,3.5):r2p:4;
      \t(6.5,2):c2:c;
      \s(8.0,2):cs2:3;
      \s(6.5,0.5):s2:5;

      \a p2->a2; \a q2->b2; \a a2->r2; \a b2->r2p; \a r2->c2; \a cs2->c2; \a c2->s2;

      \psline[linestyle=dotted](9.0,0)(9.0,7)

      \s(10.0,6.5):p3:1;
      \s(11.5,6.5):q3:2;
      \t(10.0,5):a3:a;
      \t(11.5,5):b3:b;
      \s(10.0,3.5):r3p:4;
      \s(11.5,3.5):r3:4;
      \t(11.5,2):c3:c;
      \s(10.0,2):cs3:3;
      \s(11.5,0.5):s3:5;

      \a p3->a3; \a q3->b3; \a a3->r3p; \a b3->r3; \a r3->c3; \a cs3->c3; \a c3->s3;

    \end{petrinet}
    }
  \end{center}
  \vspace{-2ex}
  \caption{A net $N$ with its two maximal GR-processes. The correspondence between elements of the net and their occurrences in the processes is indicated by labels.}
  \label{fig-unsafe}
\end{figure}

However, if one wishes to interpret \mbox{P\hspace{-1pt}/T} systems with a
causal semantics, there are alternative interpretations of what ``causal
semantics'' should actually mean. Goltz already argued that when abstracting
from the identity of multiple tokens residing in the same place, GR-processes
do not accurately reflect runs of nets, because if a Petri net is
conflict-free, in the sense that there are no choices to resolve, it should intuitively
have only one complete run, yet it may have multiple maximal GR-processes
\cite{goltz86howmany}.  This phenomenon occurs in \reffig{unsafe},
since the choice between alternative behaviours is here only due to the
possibility to choose between two tokens which can or even should be seen as
indistinguishable entities.  A similar argument is made, e.g., in
\cite{HKT95}.

At the heart of this issue is the question whether multiple tokens residing in
the same place should be seen as individual entities, so that a transition
consuming just one of them constitutes a choice, as in the interpretation
underlying GR-processes, or whether such tokens are
indistinguishable, so that taking one is equivalent to taking the other.  Van
Glabbeek {\AND} Plotkin call the former viewpoint the \defitem{individual token
interpretation} of P\hspace{-1pt}/T systems.  For an alternative
interpretation, they use the term \defitem{collective token interpretation}
\cite{glabbeek95configuration}.  A possible formalisation of these
interpretations occurs in \cite{glabbeek05individual}. In the following we
call process notions for \mbox{P\hspace{-1pt}/T~systems} which are adherent to
a collective token philosophy \defitem{abstract processes}.  Another option,
proposed by Vogler, regards tokens only as notation for a natural number
stored in each place; these numbers are incremented or decremented when firing
transitions, thereby introducing explicit causality between any transitions
removing tokens from the same place \cite{vogler91executions}.
\advance\textheight 1pt

\advance\textheight -1pt
Mazurkiewicz applies again a different approach in \cite{mazurkiewicz89multitree}.
He proposes \defitem{multitrees}, which record possible multisets of fired
transitions, and then takes confluent subsets of multitrees as abstract
processes of P\hspace{-1pt}/T systems. This approach does not explicitly
represent dependencies between transition occurrences and hence does not apply
to nets with self-loops, where such information may not always be retrieved.

Yet another approach has been proposed by Best {\AND} Devillers in
\cite{best87both}.  Here an equivalence relation is generated by a
transformation for changing causalities in GR-processes, called
\defitem{swapping}, that identifies GR-processes which differ only in
the choice which token was removed from a place. In this paper, we
adopt this approach and we show that it yields a fully satisfying
solution for a subclass of P\hspace{-1pt}/T systems. We call the
resulting notion of a more abstract process \defitem{BD-process}. In
the special case of one-safe \mbox{P\hspace{-1pt}/T} systems (where
places carry at most one token), or for condition/event systems, no
swapping is possible, and a BD-process is just an isomorphism class of
GR-processes.

Meseguer {\AND} Montanari formalise runs in a net $N$ as
morphisms in a category $\mathcal{T}(N)$ \cite{MM88}. In \cite{DMM89}
it has been established that these morphisms ``coincide with the
commutative processes defined by Best and Devillers'' (their
terminology for BD-processes). Likewise, Hoogers, Kleijn {\AND}
Thiagarajan\linebreak[4] \cite{HKT95} represent an abstract run of a net by a \defitem{trace},
thereby generalising the trace theory of Mazurkiewicz
\cite{mazurkiewicz95tracetheory}, and remark that ``it is
straightforward but laborious to set up a 1-1 correspondence between
our traces and the equivalence classes of finite processes generated
by the swap operation in [Best and Devillers, 1987]''.

As observed by Vogler \cite{vogler90swapping} (as a consequence of Corollary 5.6 therein),
it can be argued that BD-processes are not fully satisfying as
abstract processes for general \mbox{P\hspace{-1pt}/T} systems.
To illustrate this result, we recall in
\reffig{badswapping} an example due to Ochma\'nski
\cite{ochmanski89personal} --- see also
\cite{DMM89,glabbeek11ipl}. In the initial situation only two of the three
enabled transitions can fire, which constitutes a conflict.  However, the
equivalence obtained from the swapping transformation (formally defined in
\refsec{semantics}) identifies all possible maximal GR-processes and hence
yields only one complete abstract run of the system.  We are not aware of a
solution, i.e.\ any formalisation of the concept of a run of a net
that conforms to the collective token interpretation
and meets the requirement that for this net there is more than one complete run.

\begin{figure}[t]
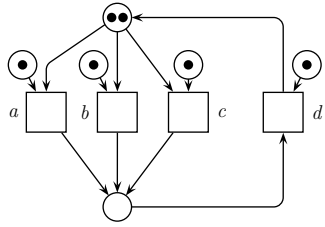

\vspace*{1em}
  \begin{center}
    \psscalebox{0.9}{
    \begin{petrinet}(12,5)
      \P(3.5,4):pa;
      \P(5,4):pb;
      \P(7,4):pc;
      \P(9.5,4):pd;

      \ul(4,3):a::a;
      \ul(5.5,3):b::b;
      \u(7,3):c::c;
      \u(9,3):d::d;

      \p(5.5,5):p;
      \p(5.5,1):q;

      \av p[210]-(4,4)->[90]a; \a pa->a; \a a->q;
      \a p->b; \a pb->b; \a b->q;
      \a p->c; \a pc->c; \a c->q;
      \av q[0]-(9,1)->[270]d; \a pd->d; \av d[90]-(9,5)->[0]p;

      \pscircle*(5.38,5){0.1}
      \pscircle*(5.62,5){0.1}
    \end{petrinet}
    }
  \end{center}
  \vspace{-5ex}
  \caption{A net with only a single process up to swapping equivalence.$\!$}
  \label{fig-badswapping}
\end{figure}

In \cite{glabbeek11ipl} and in the present paper, we continue the line of
research of \cite{MM88,DMM89,mazurkiewicz89multitree,HKT95} to formalise a
causality-based notion of an abstract process of a \mbox{P\hspace{-1pt}/T}
system that fits a collective token interpretation.  As remarked already in
\cite{goltz86howmany}, `what we need is some notion of an ``abstract
process''' and `a notion of maximality for abstract processes', such that `a
\mbox{P\hspace{-1pt}/T}-system is conflict-free iff it has exactly one maximal
abstract process starting at the initial marking'.  The example from
\reffig{badswapping} shows that BD-processes are in general not suited.
However, we show that BD-processes are completely adequate on
a subclass of P\hspace{-1pt}/T-systems --- proposed in \cite{glabbeek11ipl} --- where
conflict and concurrency are clearly separated.
We called these nets \defitem{structural conflict nets}.\linebreak[1]
Using the formalisation of conflict for
\mbox{P\hspace{-1pt}/T} systems from \cite{goltz86howmany}, we have shown in \cite{glabbeek11ipl}
that, for this subclass of P\hspace{-1pt}/T systems, we obtain
some finite BD-processes without a common extension
whenever the net contains a conflict.
The proof of this result is quite involved; it was achieved by
using an alternative characterisation of BD-processes via firing sequences
from \cite{best87both}.
As we point out in \refsec{IPLresult}, this result implies that a
structural conflict net with a largest BD-process must be conflict-free.

In this paper, we will show the reverse direction of this result, namely that
a structural conflict net has a largest BD-process if the net is conflict-free.
We then have established that there is a largest
abstract process in terms of BD-processes for structural conflict nets
\emph{if and only if} the net is conflict-free with respect to a canonical
notion of conflict.

We proceed by defining basic notions for \mbox{P\hspace{-1pt}/T} systems in
\refsec{basic}. In \refsec{semantics}, we define
GR-processes and BD-processes, as well as a natural partial order on BD-processes that gives
rise to the notion of a largest BD-process.
\refsec{conflict} recalls the concept of conflict in \mbox{P\hspace{-1pt}/T}
systems and defines structural conflict nets.\footnote{The material in
Sections \ref{sec-basic}, \ref{sec-GR} and \ref{sec-conflict} follows closely the
presentation in \cite{glabbeek11ipl}, but needs to be included to make the
paper self-contained.}
In \refsec{IPLresult} we recall a result from \cite{glabbeek11ipl} that implies that
a structural conflict net featuring any conflict cannot have a largest BD-process.
In \refsec{largestprocess} we prove the converse, 
that a conflict-free structural conflict net does have a largest BD-process.
Finally, \refsec{maximality} reformulates (and slightly strengthens) our result in the terminology of
\cite{glabbeek11ipl}, where we did not employ a partial order on BD-processes, and hence no canonical
notion of a largest BD-process. We show that a structural conflict net is conflict-free iff it has a
unique maximal GR-process up to swapping equivalence.

The results of this paper, together with a slightly extended overview on the
existing approaches on semantics of Petri nets, were previously announced in \cite{GGS11a},
with proofs in an accompanying technical report. Our current proofs are conceptually much simpler,
as they are carried out directly on BD-processes, rather than via the auxiliary concepts of BD-runs and
FS-runs. This became possible after turning swapping equivalence into a preorder on BD-processes,
simply by employing only one of the two symmetric clauses defining this relation. That idea stems
from Walter Vogler [personal communication, 20-11-2012], whom we gratefully acknowledge.

Additionally, we thank the referees for their very detailed and helpful comments.

\section{Place/transition systems}\label{sec-basic}

\noindent
We will employ the following notations for multisets.

\begin{define}{
  Let $X$ be a set.
}\label{df-multiset}
\item A {\em multiset} over $X$ is a function $A\!:X \rightarrow \bbbn$,
i.e.\ $A\in \powermultiset{X}\!\!$.
\item $x \in X$ is an \defitem{element of} $A$, notation $x \in A$, iff $A(x) > 0$.
\item For multisets $A$ and $B$ over $X$ we write $A \subseteq B$ iff
 \mbox{$A(x) \leq B(x)$} for all $x \inp X$;
\\ $A\cup B$ denotes the multiset over $X$ with $(A\cup B)(x):=\text{max}(A(x), B(x))$,
\\ $A\cap B$ denotes the multiset over $X$ with $(A\cap B)(x):=\text{min}(A(x), B(x))$,
\\ $A + B$ denotes the multiset over $X$ with $(A + B)(x):=A(x)+B(x)$,
\\ $A - B$ is given by
$(A - B)(x):=A(x)\monus B(x)=\mbox{max}(A(x)-B(x),0)$, and
for $k\inp\bbbn$ the multiset $k\cdot A$ is given by
$(k \cdot A)(x):=k\cdot A(x)$.
\item The function $\emptyset\!:X\rightarrow\bbbn$, given by
  $\emptyset(x):=0$ for all $x \inp X$, is the \emph{empty} multiset over $X$.
\item If $A$ is a multiset over $X$ and $Y\subseteq X$ then
  $A\restrictedto Y$ denotes the multiset over $Y$ defined by
  $(A\restrictedto Y)(x) := A(x)$ for all $x \inp Y$.
\item The cardinality $|A|$ of a multiset $A$ over $X$ is given by
  $|A| := \sum_{x\in X}A(x)$.
\item A multiset $A$ over $X$ is \emph{finite}
  iff $|A|<\infty$, i.e.,
  iff the set $\{x \mid x \inp A\}$ is finite.
\item A function $\pi: X \rightarrow Y$ extends to multisets $A \in \powermultiset{X}$ by
  \plat{$\displaystyle\pi(A)(y) = \!\sum_{y = \pi(x)}\!A(x)$}. In this paper, this sum
  will always turn out to be finite.
\end{define}
Two multisets $A\!:X \rightarrow \bbbn$ and
$B\!:Y\rightarrow \bbbn$
are \emph{extensionally equivalent} iff
$A\restrictedto (X\cap Y) = B\restrictedto (X\cap Y)$,
$A\restrictedto (X\setminus Y) = \emptyset$, and
$B \restrictedto (Y\setminus X) = \emptyset$.
In this paper we often do not distinguish extensionally equivalent
multisets. This enables us, for instance, to use $A \cup B$ even
when $A$ and $B$ have different underlying domains.
With $\{x,x,y\}$ we will denote the multiset over $\{x,y\}$ with
$A(x)\mathbin=2$ and $A(y)\mathbin=1$, rather than the set $\{x,y\}$ itself.
A multiset $A$ with $A(x) \leq 1$ for all $x$ is
identified with the set $\{x \mid A(x)=1\}$.

Below we define place/transition systems as net structures with an initial marking.
In the literature we find slight variations in the definition of \mbox{P\hspace{-1pt}/T}
systems concerning the requirements for pre- and postsets of places
and transitions. In our case, we do allow isolated places. For
transitions we allow empty postsets, but require at least one
preplace, thus avoiding problems with infinite self-concurrency.
Moreover, following \cite{best87both}, we restrict attention
to nets of \defitem{finite synchronisation}, meaning that each
transition has only finitely many pre- and postplaces.
Arc weights are included by defining the flow relation as a function to the natural numbers.
For succinctness, we will refer to our version of a \mbox{P\hspace{-1pt}/T} system as a \defitem{net}.

\begin{define}{}\label{df-nst}
\item[]
  A \defitem{net} is a tuple
  $N = (S, T, F, M_0)$ where
  \begin{itemise}
    \item $S$ and $T$ are disjoint sets (of \defitem{places} and \defitem{transitions}),
    \item $F: ((S \mathord\times T) \mathrel\cup (T \mathord\times S)) \rightarrow \bbbn$
      (the \defitem{flow relation} including \defitem{arc weights}), and
    \item $M_0 : S \rightarrow \bbbn$ (the \defitem{initial marking})
  \end{itemise}
  such that for all $t \inp T$ the set $\{s\mid F(s, t) > 0\}$ is
  finite and non-empty, and the set $\{s\mid F(t, s) > 0\}$ is finite.
\end{define}

\noindent
Graphically, nets are depicted by drawing the places as circles and
the transitions as boxes. For $x,y \inp S\cup T$ there are $F(x,y)$
arrows (\defitem{arcs}) from $x$ to $y$.\footnote{This is a presentational alternative for the
  common approach of having at most one arc from $x$ to $y$, labelled with the \emph{arcweight}
  $F(x,y) \in \bbbn$.}  When a net represents a
concurrent system, a global state of this system is given as a
\defitem{marking}, a multiset of places, depicted by placing $M(s)$
dots (\defitem{tokens}) in each place $s$.  The initial state is
$M_0$.

\begin{define}{
  Let $N\!=\!(S, T, F, M_0)$ be a net and $x\inp S\cup T$.
}\label{df-preset}
\item[]
The multisets $\precond{x},~\postcond{x}: S\cup T \rightarrow
\bbbn$ are given by $\precond{x}(y)=F(y,x)$ and
$\postcond{x}(y)=F(x,y)$ for all $y \inp S \cup T$.
If $x\in T$, the elements of $\precond{x}$ and $\postcond{x}$ are
called \emph{pre-} and \emph{postplaces} of $x$, respectively.
These functions extend to multisets
$X:S \cup T \rightarrow\bbbn$ as usual, by
$\precond{X} := \sum_{x \in S \cup T}X(x)\cdot\precond{x}$ and
$\postcond{X} := \sum_{x \in S \cup T}X(x)\cdot\postcond{x}$.
\end{define}
The system behaviour is defined by the possible moves between
markings $M$ and $M'$, which take place when a finite multiset $G$ of
transitions \defitem{fires}.  When firing a transition, tokens on
preplaces are consumed and tokens on postplaces are created, one for
every incoming or outgoing arc of $t$, respectively.  Obviously, a
transition can only fire if all necessary tokens are available in $M$
in the first place. \refdf{firing} formalises this notion of behaviour.

\begin{define}{
  Let $N \mathbin= (S, T, F, M_0)$ be a net,
  $G \in \bbbn^T\!$, $G$ non-empty and finite, and $M, M' \in \bbbn^S\!$.
}\label{df-firing}
\item[]
$G$ is a \defitem{step} from $M$ to $M'$,
written $M\production{G}_N M'$, iff
\begin{itemise}
  \item $^\bullet G \subseteq M$ ($G$ is \defitem{enabled}) and
  \item $M' = (M - \mbox{$^\bullet G$}) + G^\bullet$. 
\end{itemise}
We may leave out the subscript $N$ if clear from context.
For a word $\sigma = t_1t_2\ldots t_n \in T^*$
we write $M\production{\sigma} M'$ for\vspace{-5pt}
$$
\exists M_1, M_2, \ldots, M_{n-1}.
M\!\production{\{t_1\}}\! M_1\!\production{\{t_2\}}\! M_2 \cdots M_{n-1}\!\production{\{t_n\}}\! M'\!\!.
$$
When omitting $\sigma$ or $M'$ we always mean it to be existentially quantified.
\end{define}

\noindent
Note that steps are (finite) multisets, thus allowing self-concurrency.
Also note that $M\goesto[\{t,u\}]$ implies $M\goesto[tu]$ and $M\goesto[ut]$.

\section{Processes of place/transition systems}\label{sec-semantics}

\noindent
We now define two notions of a process of a net, modelling a run of
the represented system on two levels of abstraction.

\subsection{GR-processes}\label{sec-GR}

A (GR-)process is essentially a conflict-free, acyclic net together
with a mapping function to the original net. It can be obtained by
unwinding the original net, choosing one of the alternatives in case
of conflict.
The acyclic nature of the process gives rise to a notion of causality
for transition firings in the original net via the mapping function.
A conflict present in the original net is represented by the existence of
multiple processes, each representing one possible way to decide the conflict.

\begin{define}{}\label{df-process}
 \item[]
  A pair $\PP = (\NN, \pi)$ is a
  \defitem{(GR-)process} of a net $N = (S, T, F, M_0)$
  iff
  \begin{itemise}\itemsep 3pt
   \item $\NN = (\SS, \TT, \FF, \MM_0)$ is a net, satisfying
   \begin{itemisei}
    \item $\forall s \in \SS. |\precond{s}| \leq\! 1\! \geq |\postcond{s}|
    \wedge\, \MM_0(s) = \left\{\begin{array}{@{}l@{\quad}l@{}}1&\mbox{if $\precond{s}=\emptyset$}\\
                                   0&\mbox{otherwise,}\end{array}\right.$
    \item $\FF$ is acyclic, \ie
      $\forall x \inp \SS \cup \TT. (x, x) \mathbin{\not\in} \FF^+$,
      where $\FF^+$ is the transitive closure of $\{(x,y)\mid \FF(x,y)>0\}$,
    \item and $\{t \in \TT \mid (t,u)\in \FF^+\}$ is finite for all $u\in \TT$.
   \end{itemisei}
    \item $\pi:\SS \cup \TT \rightarrow S \cup T$ is a function with 
    $\pi(\SS) \subseteq S$ and $\pi(\TT) \subseteq T$, satisfying
   \begin{itemisei}
    \item $\pi(\MM_0) = M_0$, i.e.\ $M_0(s) = |\pi^{-1}(s) \cap \MM_0|$ for all $s\in S$, and
    \item $\forall t \in \TT, s \in S.
      F(s, \pi(t)) = |\pi^{-1}(s) \cap \precond{t}| \wedge
      F(\pi(t), s) = |\pi^{-1}(s) \cap \postcond{t}|$, i.e.\
      $\forall t \in \TT. \pi (^\bullet t)= {^\bullet \pi(t)} \wedge
      \pi (t^\bullet)= {\pi(t)^\bullet}$.
  \end{itemisei}
  \end{itemise}
  $P$ is called \defitem{finite} if $\TT$ is finite. The \defitem{end of} $P$
  is defined as $P^\circ = \{s \in \SS \mid \postcond{s} = \emptyset\}$.%
\end{define}

\noindent
Let $\GR$ (resp.~$\fGR$) denote the collection of (finite) GR-processes of~$N\!$.

A process is not required to represent a completed run of the original net.
It might just as well stop early. In those cases, some set of transitions can
be added to the process such that another (larger) process is obtained. This
corresponds to the system taking some more steps and gives rise to a natural
order between processes.

\begin{define}{
  Let $\PP = ((\SS, \TT, \FF, \MM_0), \pi)$ and $\PP' = ((\SS', \TT\,', \FF\,', \MM_0'), \pi')$ be
  two processes of the same net.
}\label{df-extension}
\item
  $\PP'$ is a \defitem{prefix} of $\PP$, notation $\PP'\leq \PP$, and 
  $\PP$ an \defitem{extension} of $\PP'$, iff 
    $\SS'\subseteq \SS$,
    $\TT\,'\subseteq \TT$,
    $\MM_0' = \MM_0$,
    $\FF\,'=\FF\restrictedto(\SS' \mathord\times \TT\,' \mathrel\cup \TT\,' \mathord\times \SS')$
    and $\pi'=\pi\restrictedto(\SS'\cup \TT\,')$.
\item
  A process of a net is said to be \defitem{maximal} if 
  it has no proper extension.
\end{define}

\noindent
The requirements above imply that if $\PP'\leq \PP$, $(x,y)\in
\FF^+$ and $y\in \SS' \cup \TT\,'$ then $x\in \SS' \cup \TT\,'$.
Conversely, any subset $\TT\,'\subseteq \TT$ satisfying
$(t,u)\in \FF^+ \wedge u\in \TT\,' \Rightarrow t\in \TT\,'$ uniquely determines a
prefix of $\PP$, denoted $\PP\mathbin\upharpoonright\TT\,'$.

In \cite{petri77nonsequential,genrich80dictionary,goltz83nonsequential} processes were
defined without requiring the third condition on $\NN$ from \refdf{process}.
Goltz {\AND} Reisig \cite{goltz83nonsequential} observed that certain
processes did not correspond with runs of systems, and proposed to
restrict the notion of a process to those that can be approximated by
finite processes \cite[end of Section~3]{goltz83nonsequential}.
This is the role of the third condition on $\NN$ in \refdf{process}; it
is equivalent to requiring that each transition occurs in a finite prefix.
In \cite{petri77nonsequential,genrich80dictionary,goltz83nonsequential}
only processes of finite nets were considered. For those processes,
the requirement of \emph{discreteness} proposed in \cite{goltz83nonsequential}
is equivalent with imposing the third condition on $\NN$ in \refdf{process}
\cite[Theorem 2.14]{goltz83nonsequential}.

Two processes $\PP \mathbin= (\NN, \pi)$ and $\PP' \mathbin= (\NN\,', \pi')$
are \defitem{isomorphic}, notation $\PP \cong \PP'$, iff there exists
an isomorphism $\phi$ from $\NN$ to $\NN\,'$ which respects the
process mapping, i.e.\ $\pi = \pi' \circ \phi$.
Here an \emph{isomorphism} $\phi$ between two nets $\NN=(\SS, \TT, \FF,
\MM_0)$ and $\NN\,'=(\SS', \TT\,', \FF\,', \MM'_0)$ is a
bijection between their places and transitions such that
$\MM'_0(\phi(s))=\MM_0(s)$ for all $s\in\SS$ and
$\FF\,'(\phi(x),\phi(y))=\FF(x,y)$ for all $x,y\in \SS\cup\TT$.

\subsection{BD-processes}

Next we formally introduce the swapping transformation and the resulting\linebreak[4]
equivalence notion on GR-processes from \cite{best87both}.

\begin{define}{
  Let $\PP = ((\SS, \TT, \FF, \MM_0), \pi)$ be a process and
  let $p, q \in \SS$ with $(p,q) \notin \FF^+\cup (\FF^+)^{-1}$ and
  $\pi(p) = \pi(q)$.
  }
\label{df-swap}
\item[]
  Then $\swap(\PP, p, q)$ is defined as $((\SS, \TT, \FF\,', \MM_0), \pi)$ with
  \begin{equation*}
    \FF\,'(x, y) = \begin{cases}
      \FF(q, y) & \text{ iff } x = p,\, y \in \TT\\
      \FF(p, y) & \text{ iff } x = q,\, y \in \TT\\
      \FF(x, y) & \text{ otherwise. }
    \end{cases}
  \end{equation*}
\end{define}

\noindent
  If $P$ is the first process depicted in \reffig{unsafe}, with $p$ and $q$ the two places
  that are mapped to place $4$ of the underlying net $N$, then $\swap(\PP, p, q)$ is the second
  process of \reffig{unsafe}. The transformation simply swaps the arcs leaving $p$ and $q$.

\begin{define}{}\label{df-swapeq}
\item
  Two processes $\PP$ and $\QQ$ of the same net are
  \defitem{one step swapping equivalent} ($\PP \swapeq \QQ$) iff
  $\swap(\PP, p, q)$ is isomorphic to $\QQ$ for some places $p$ and $q$.
\item
We write $\swapeq^*$ for the reflexive and transitive closure of $\swapeq$.
\pagebreak[3]
\end{define}
By taking $p\mathbin=q$ in \refdf{swapeq} one finds that $P \mathbin{\swapeq} P$ for any
non-empty process~$P\!$.
In \cite[Definition 7.8]{best87both} swapping equivalence---there denoted $\equiv_1^\infty$---is defined
in terms of \emph{reachable B-cuts}. Using \cite[Definition 3.14]{best87both}
this definition can be rephrased as follows:
\begin{define}{Let $N$ be a net, and $P,Q\inp \GR$.}\label{df-BD-swapping}
\item[]
Then $\PP \equiv_1^\infty \QQ$ iff
\begin{equation}\label{swapBD}
\forall P''\inp\fGR, P'' \leq P. \exists Q'\swapeq^* Q,~ Q''\leq
Q'. P'' \cong Q''
\end{equation}
and, vice versa,
$$\forall Q''\inp\fGR, Q'' \leq Q. \exists P'\swapeq^* P,~ P''\leq P'. P'' \cong Q''\;.$$
\end{define}
In \cite[Theorem 7.9]{best87both} (as well as below) it is shown that
$\equiv_1^\infty$ is an equivalence relation on GR-processes.
Trivially, $\swapeq^*$ is included in $\swapeq^\infty$.
\begin{define}{}
\item[]
We call a $\swapeq^\infty$-equivalence class of GR-processes
a \defitem{{\small BD-}process}, and write $\BDinf{P}$.
\end{define}
To support the idea that $\swapeq^\infty$ is a natural
equivalence relation on GR-processes, an alternative characterisation of
$\swapeq^\infty$ is presented in \refsec{characterisation} below.

In order to establish concepts of a maximal and of a largest BD-process, we turn
$\equiv_1^\infty$ into a preorder by focusing on only one of the two
clauses of \refdf{BD-swapping} (formulated differently).

\begin{define}{Let $N$ be a net, and $P,Q\in \GR$.}\label{df-BD-swapping-alt}
\item[]
Then $\PP \sqsubseteq_1^\infty \QQ$ iff\vspace{-1ex}
\begin{equation}\label{swapGGS}
\forall P''\inp\fGR, P'' \leq P. \exists P',Q'\in\fGR.
  P'' \leq P' \swapeqi^* Q' \leq Q.
\end{equation}
\vspace{-4ex}
\end{define}
We proceed to show that (\ref{swapGGS}) is equivalent to
(\ref{swapBD}) and that $\sqsubseteq_1^\infty$ is a preorder.

For $P,Q\inp\GR$ write $P\sim_S Q$ if there are places $p$ and $q$ such that
$\swap(\PP, p, q)=Q$ (or equivalently $\swap(Q,p,q)=P$).
Clearly, the operations of swapping two places in a process, and of bijectively
renaming all places and transitions, commute:
\begin{observation}\label{obs-swapiso}
$\exists P'\inp\GR. P\cong P' \sim_S Q
 ~\Leftrightarrow~
 \exists Q'\inp\GR. P\sim_S Q' \cong Q$.
\end{observation}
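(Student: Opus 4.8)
The plan is to treat this as a pure commutation statement between two operations that act on a process essentially independently: the swap of \refdf{swap}, which only rewrites the arcs leaving two places, and net isomorphism, which only relabels elements while preserving all structure. First I would observe that it suffices to prove a single implication, say ``$\Rightarrow$'', because both relations involved are symmetric. Indeed $\cong$ is an equivalence, and $\sim_S$ is symmetric by the very definition introducing it (the clause ``or equivalently $\swap(Q,p,q)=P$''). Consequently the left-hand side for the pair $(P,Q)$ is, up to renaming the bound variable and using these symmetries, verbatim the right-hand side for the pair $(Q,P)$. Hence the ``$\Leftarrow$'' direction for all $P,Q$ follows from ``$\Rightarrow$'' simply by interchanging the names of $P$ and $Q$.

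For the ``$\Rightarrow$'' direction, suppose $\phi$ is an isomorphism from $P$ to $P'$ and that $Q = \swap(P',p',q')$ for some places $p',q'$ of $P'$. The natural candidate is $Q' := \swap(P,p,q)$ with $p:=\phi^{-1}(p')$ and $q:=\phi^{-1}(q')$. I would first check that $p,q$ are legal swap places for $P$: since $\phi$ is an isomorphism it respects $\pi$ and preserves the flow relation $\FF$, hence also its transitive closure $\FF^+$; therefore $\pi(p)=\pi(q)$ and $(p,q)\notin\FF^+\cup(\FF^+)^{-1}$ are inherited from the corresponding facts about $p',q'$ in $P'$. This yields $P\sim_S Q'$ at once.

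The crux is then to verify that the \emph{same} bijection $\phi$ is an isomorphism from $Q'$ to $Q$ respecting $\pi$. Since swapping alters neither the carrier sets, nor $\MM_0$, nor $\pi$, the only thing needing re-inspection is preservation of the modified flow, i.e.\ $\FF_Q(\phi(x),\phi(y))=\FF_{Q'}(x,y)$ for all $x,y$. This is a short case analysis driven by the fact that $\phi$ is a bijection with $\phi(p)=p'$, $\phi(q)=q'$ and $\phi(\TT)=\TT'$: an arc in $Q'$ leaving $p$ (resp.\ $q$) into a transition corresponds under $\phi$ exactly to an arc in $Q$ leaving $p'$ (resp.\ $q'$) into the matching transition, so the two swaps act in lockstep, while on every remaining arc neither swap has any effect and $\FF_Q(\phi(x),\phi(y))=\FF_{P'}(\phi(x),\phi(y))=\FF_P(x,y)=\FF_{Q'}(x,y)$ holds directly by $\phi$ being an isomorphism of $P$ and $P'$. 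I expect this flow-preservation check to be the only genuine work, and it is entirely routine; the conceptual content is merely that relabelling the swapped places along $\phi$ turns one swap into the other. Concluding, $\phi$ witnesses $Q'\cong Q$ together with $P\sim_S Q'$, which is precisely the right-hand side.
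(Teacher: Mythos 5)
Your proposal is correct and is exactly the argument the paper has in mind: the paper states this as an Observation without proof (``Clearly, the operations \dots commute''), and your reduction to one direction by symmetry followed by transporting the swapped places along $\phi^{-1}$ and checking that the same bijection remains flow-preserving after both swaps is the intended, routine justification.
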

The same holds for the operations of taking a prefix and of bijectively
renaming all places and transitions:
\begin{observation}\label{obs-isoprefix}
$\exists P'\inp\GR. P\cong P' \leq Q
 ~\Leftrightarrow~
 \exists Q'\inp\GR. P\leq Q' \cong Q$.
\end{observation}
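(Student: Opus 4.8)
The plan is to reduce both implications to a single structural fact: \emph{isomorphisms preserve the prefix relation}. Concretely, I would first establish the auxiliary claim that if $\psi$ is an isomorphism from a process $R$ to a process $R'$ and $P_0 \leq R$, then the image $\psi(P_0)$---the process whose places are $\psi(\SS_{P_0})$, whose transitions are $\psi(\TT_{P_0})$, and whose flow relation, initial marking and process map are transported along $\psi$---is a prefix of $R'$. This is immediate from the definitions: an isomorphism is a bijection respecting $\FF$, $\MM_0$ and $\pi$, and all five clauses of \refdf{extension} (inclusion of places and transitions, equality of initial markings, and the restrictions of $\FF$ and $\pi$) are phrased purely in terms of this structure, so they transport verbatim along $\psi$.

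Granting this claim, the forward direction is straightforward. Suppose $P \cong P' \leq Q$ via an isomorphism $\phi\colon P \to P'$. I would extend the inverse $\phi^{-1}\colon P' \to P$ to a bijection $\psi$ on all of $Q$ by sending the elements of $Q$ outside $P'$ to fresh names, and let $Q'$ be the relabelling of $Q$ along $\psi$. Then $\psi\colon Q \to Q'$ is an isomorphism respecting the process map by construction, so $Q' \cong Q$; and since $\psi$ restricts to $\phi^{-1}$ on $P'$, the image $\psi(P')$ is precisely $P$. Applying the auxiliary claim to $P' \leq Q$ yields $P = \psi(P') \leq Q'$, as required. The backward direction is even simpler: given $P \leq Q' \cong Q$ via $\chi\colon Q' \to Q$, I would take $P' := \chi(P)$; the claim gives $P' \leq Q$, and $\chi$ restricted to $P$ witnesses $P \cong P'$.

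The only point requiring a little care---and hence the step I would spell out most explicitly---is the initial-marking condition, both in the auxiliary claim and in verifying $\psi(P') = P$. Here I would note that for GR-processes the initial marking is structurally determined ($\MM_0(s)=1$ iff $\precond{s}=\emptyset$), and that the downward-closure property recorded after \refdf{extension} guarantees, for every place $s$ of the prefix, that its entire preset already lies inside the prefix; consequently the marking conditions are preserved under the transport. Everything else is routine bookkeeping, entirely parallel to the swapping-versus-renaming commutation of \refobs{swapiso}.
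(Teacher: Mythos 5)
Your proposal is correct: the paper states this as an unproved observation (``The same holds for the operations of taking a prefix and of bijectively renaming all places and transitions''), and your argument---transporting a prefix along an isomorphism, extending $\phi^{-1}$ by fresh names for the forward direction, and taking the direct image $\chi(P)$ for the backward direction---is precisely the routine verification the paper leaves implicit, including the one genuinely non-trivial point that the initially marked places of $Q$ all lie in the prefix $P'$ so that the marking condition transports correctly.
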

Moreover, instead of first swapping two places $p$ and $q$ in a process and then
extending the resulting process, we can just as well first extend and then swap:
\begin{observation}\label{obs-swapprefix}
$\exists P'\inp\GR. P\sim_S P' \leq Q
 ~\Rightarrow~
 \exists Q'\inp\GR. P\leq Q' \sim_S Q$.
\end{observation}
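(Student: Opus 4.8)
The plan is to prove the implication constructively. Let $p,q\in\SS$ be the two places witnessing $P\sim_S P'$, so that $P'=\swap(P,p,q)$, and set $Q':=\swap(Q,p,q)$, where $(\SS_Q,\TT_Q,\FF_Q,\MM_0)$ is the net underlying $Q$. To conclude, it then suffices to check three things: that $\swap(Q,p,q)$ is a well-defined process, i.e.\ that $\pi_Q(p)=\pi_Q(q)$ and $(p,q)\notin\FF_Q^+\cup(\FF_Q^+)^{-1}$; that $Q'\sim_S Q$; and that $P\leq Q'$.

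Two of these are immediate. Since $P'\leq Q$ we have $\pi_Q\restrictedto(\SS\cup\TT)=\pi$, so $\pi_Q(p)=\pi(p)=\pi(q)=\pi_Q(q)$. As $\swap$ is an involution in its two place arguments, $\swap(Q',p,q)=Q$ once $\swap(Q,p,q)$ is legal, giving $Q'\sim_S Q$ (and $Q'\in\GR$). For $P\leq Q'$ the node inclusions, the initial marking and the labelling are inherited from $P'\leq Q$, since neither swapping nor taking a prefix alters $\MM_0$ or $\pi$; so only the flow relation must be matched. Writing $\FF_Q'$ for the flow of $Q'$, I would verify $\FF=\FF_Q'\restrictedto(\SS\times\TT\cup\TT\times\SS)$ arc by arc: on arcs not leaving $p$ or $q$ it is just $\FF=\FF'=\FF_Q=\FF_Q'$ via $P'\leq Q$, whereas on the arcs leaving $p$ and $q$ the swap from $Q$ to $Q'$ undoes the swap from $P$ to $P'$, e.g.\ $\FF_Q'(p,y)=\FF_Q(q,y)=\FF'(q,y)=\FF(p,y)$ for $y\in\TT$, and symmetrically for $q$.

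The one genuine obstacle is well-definedness of $\swap(Q,p,q)$, namely $(p,q)\notin\FF_Q^+\cup(\FF_Q^+)^{-1}$, and I would prove it in two steps. First, by the remark following Definition~\ref{df-extension}, every $\FF_Q^+$-predecessor of a node of the prefix again lies in the prefix and is reached by a path staying inside it; applied to $p,q\in\SS$ this shows that the $\FF_Q^+$-predecessors of $p$ and of $q$ coincide with their $\FF'^+$-predecessors, so $p$ and $q$ are causally related in $Q$ iff they are so in $P'$. Second, I would show they are unrelated in $P'$. Suppose there were an $\FF'$-path from $p$ to $q$. Its first arc $p\to t$ has $\FF'(p,t)=\FF(q,t)>0$, and by acyclicity of $P'$ the places $p$ and $q$ occur only as its endpoints, so every later arc is left unchanged by the swap; replacing $p\to t$ by $q\to t$ therefore yields an $\FF$-path from $q$ to $q$, contradicting acyclicity of $P$. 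The symmetric argument excludes an $\FF'$-path from $q$ to $p$. (Alternatively, causal independence of $p$ and $q$ in $P'=\swap(P,p,q)$ can be quoted from the symmetry of $\sim_S$ recorded when it was introduced.) This establishes all three obligations, and hence the observation.
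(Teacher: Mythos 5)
Your proof is correct, and it formalises exactly the argument the paper intends: the paper states this as an unproved observation, glossed only by the sentence ``instead of first swapping two places $p$ and $q$ in a process and then extending the resulting process, we can just as well first extend and then swap'', which is precisely your construction $Q':=\swap(Q,p,q)$ with the same witnessing places. Your careful verification of well-definedness (that $p$ and $q$ remain causally unrelated in $Q$, via downward-closure of prefixes and acyclicity of $P$) supplies detail the paper omits, and is sound.
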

This implication can in general not be reversed, since it could be that out of two
swapped places $p$ and $q$ occurring in $Q$ and $Q'$ only one occurs in $P$.
\begin{lemma}\label{lem-swapprefix}\rm
If $P\leq Q' \sim_S Q$ for some $P\inp\fGR$ and $Q',Q\inp\GR$, then there are
$P',P''\inp\fGR$ with $P \leq P' \sim_S P'' \leq Q$.
\pagebreak[3]
\end{lemma}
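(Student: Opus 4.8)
The plan is to realise the two intermediate processes as prefixes of $Q'$ and of $Q$ determined by one and the same set of transitions. Write $\swap(Q',p,q)=Q$, and let $t_p$ (resp.\ $t_q$) be the unique transition of $Q'$ consuming $p$ (resp.\ $q$), if such a transition exists; the condition $|\postcond{s}|\le 1$ guarantees uniqueness. The crucial structural observation is that swapping redirects only the arcs $p\to t_p,\,q\to t_q$ into $p\to t_q,\,q\to t_p$: it changes the \emph{presets} of $t_p,t_q$ but leaves every \emph{postset} of every transition untouched, and likewise leaves untouched the producer chains upstream of $p$ and $q$. Consequently the $\FF^+$-successor sets of $t_p$ and $t_q$, and the $\FF^+$-predecessor sets of $p$ and $q$, coincide in $Q'$ and $Q$.

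Let $\TT_P$ be the finite, $\FF_{Q'}^+$-downward-closed transition set with $P = Q'\restrictedto\TT_P$. I would take $\TT'$ to be the downward closure in $Q'$ of $\TT_P$ together with the producing transitions of $p$ and of $q$ (those that exist). By the third condition on $\NN$ in \refdf{process} each transition has finitely many $\FF^+$-predecessors, so $\TT'$ is finite; it contains $\TT_P$ and is $\FF_{Q'}^+$-downward-closed, whence $P\le P':=Q'\restrictedto\TT'$ with $P'\in\fGR$. Since the producers of $p$ and $q$ lie in $\TT'$, both $p$ and $q$ are places of $P'$; they remain $\FF_{P'}^+$-incomparable (as $\FF_{P'}^+\subseteq\FF_{Q'}^+$) with $\pi(p)=\pi(q)$, so $\swap(P',p,q)$ is defined and I set $P'':=\swap(P',p,q)$.

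The heart of the argument is to show that $\TT'$ is \emph{also} downward-closed for $\FF_Q^+$, so that $P''$ equals the honest prefix $Q\restrictedto\TT'$. Writing $x<_Q y$ for $(x,y)\in\FF_Q^+$, I would trace a $Q$-causal path witnessing $t'<_Q t$ for $t\in\TT'$: if it avoids the two redirected arcs it is a $Q'$-path, so $t'\in\TT'$ by $\FF_{Q'}^+$-closure; otherwise it runs through, say, $q\to t_p$, forcing $t_p<_Q t$ and hence $t_p<_{Q'} t$ (successors of $t_p$ agree in both nets), so $t_p\in\TT'$; its initial segment then gives $t'\le_{Q'} q$, i.e.\ $t'$ is a $Q'$-predecessor of the producer of $q$, which generates $\TT'$, so $t'\in\TT'$. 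The case $p\to t_q$ is symmetric. I expect this case analysis to be the main obstacle, since it is exactly here that one sees \emph{why} the producers of $p$ and $q$ must be thrown into the closure: they are what absorbs the new causal dependencies created by the swap (adding only $t_p,t_q$ would not suffice).

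Finally I would check $P'\sim_S P''$ by confirming $\swap(P',p,q)=Q\restrictedto\TT'$ directly. Both nets carry transition set $\TT'$, and since producers are unaffected by swapping, the same place set, initial marking, and restricted mapping $\pi$. For the flow relation only the outgoing arcs of $p$ and $q$ can differ from $Q'$, and in both $\swap(P',p,q)$ and $Q\restrictedto\TT'$ the arc $p\to t_q$ is present exactly when $t_q\in\TT'$ and $q\to t_p$ exactly when $t_p\in\TT'$, all remaining arcs being inherited unchanged from $Q'$. Hence $P''=Q\restrictedto\TT'\le Q$ with $P''\in\fGR$, yielding $P\le P'\sim_S P''\le Q$ as required.
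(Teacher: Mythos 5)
Your proof is correct and follows exactly the paper's route: enlarge $P$ to a finite prefix $P'$ of $Q'$ that also contains $p$ and $q$ (via downward closure of their producers), set $P'':=\swap(P',p,q)$, and check $P''\leq Q$. The paper's own proof is just this construction stated in three sentences with the verification left implicit; your write-up supplies the missing details (in the case analysis on a $Q$-causal path you should say ``the \emph{first} redirected arc on the path'' so that the initial segment is guaranteed to be a $Q'$-path, but that is a trivial repair).
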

\begin{proof}
Let $Q=\swap(Q',p,q)$ for certain places $p$ and $q$ in $Q'$. Take a finite
prefix $P'$ of $Q'$ that includes $P$ as well as $p$ and $q$. Then
$P\leq P'\leq Q'$. Let $P'':=\swap(P',p,q)$. Then $P' \sim_S P'' \leq Q$.
\qed
\end{proof}
$P\swapeq Q$ is defined by $\exists Q'. P\sim_S Q' \cong Q$.
Using that $\sim_S$ is reflexive on nonempty processes we have:
\begin{observation}\label{obs-iso-swap}
${\cong} \subseteq {\swapeq^*}$.
\end{observation}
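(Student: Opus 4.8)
The plan is to read the inclusion off directly from the reformulation, stated just above, that $P \swapeq Q$ holds iff $P \sim_S Q' \cong Q$ for some $Q'$, together with the reflexivity of $\sim_S$ on nonempty processes. So I would assume $P \cong Q$ and aim to produce $P \swapeq^* Q$; since $\swapeq \subseteq {\swapeq^*}$ it suffices in the principal case to obtain $P \swapeq Q$. The whole argument splits according to whether $P$ possesses a place.

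First I would dispose of the generic case in which $P$ --- and hence the isomorphic $Q$ --- has at least one place. Choosing any $p \in \SS$, the data $(P, p, p)$ meet the side conditions of \refdf{swap}: $\pi(p) = \pi(p)$ holds trivially, and $(p,p) \notin \FF^+ \cup (\FF^+)^{-1}$ because $\FF$ is acyclic (\refdf{process}). Reading the case split of \refdf{swap} with $q = p$ then gives $\swap(P, p, p) = P$, that is, $P \sim_S P$. Feeding $P \sim_S P \cong Q$ into the definition of $\swapeq$ with witness $Q' := P$ yields $P \swapeq Q$, and thus $P \swapeq^* Q$.

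The only genuine obstacle is the degenerate case $\SS = \emptyset$, where $\sim_S$ has nothing to swap and the argument above collapses. Here I would first observe that $\SS = \emptyset$ forces $\TT = \emptyset$: by \refdf{nst} every transition has a nonempty preset, and in a process net the preset of a transition is a sub-multiset of $\SS$, so no transition can survive once $\SS = \emptyset$. Hence $P$ is the empty process, of which there is exactly one; as $P \cong Q$ this makes $Q$ empty as well, so $P = Q$. The conclusion $P \swapeq^* Q$ then follows from nothing more than ${\swapeq^*}$ being a reflexive closure, with no swap required. Combining the two cases establishes ${\cong} \subseteq {\swapeq^*}$.
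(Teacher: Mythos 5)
Your proof is correct and follows essentially the same route as the paper, which justifies the observation precisely by noting that $\sim_S$ is reflexive on nonempty processes (via $\swap(P,p,p)=P$) and that $\swapeq^*$ is reflexive anyway. You merely spell out the degenerate case of the empty process, which the paper handles implicitly with the word ``nonempty''.
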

So, using \refobs{isoprefix}, (\ref{swapBD}) can be reformulated as
$$\forall P''\inp\fGR, P'' \leq P. \exists Q'\swapeqi^* Q. P''\leq Q'.$$
From Observations~\ref{obs-isoprefix} and~\ref{obs-swapprefix} we obtain:
\begin{corollary}\label{cor-swapprefix}
$\exists P'\inp\GR. P\swapeqi^* P' \leq Q
 \Rightarrow
 \exists Q'\inp\GR. P\leq Q' \swapeqi^* Q$.
\qed
\end{corollary}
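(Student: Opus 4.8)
The plan is to reduce the claim to a single $\swapeq$-step and then iterate. Concretely, I would first establish the one-step form
$$\exists P'\inp\GR. P\swapeq P' \leq Q \Rightarrow \exists Q'\inp\GR. P\leq Q' \swapeq Q,$$
and then lift it to $\swapeqi^*$ by induction on the length of the swapping chain. To prove the one-step form, recall that $P\swapeq P'$ means $P\sim_S\hat P\cong P'$ for some process $\hat P$; together with $P'\leq Q$ this gives a chain $P\sim_S\hat P\cong P'\leq Q$. The idea is to slide the prefix $\leq Q$ leftwards past the isomorphism and then past the swap, turning ``relation-then-prefix'' into ``prefix-then-relation'' at each stage.

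For the slide past $\cong$ I would apply the forward direction of \refobs{isoprefix} to $\hat P\cong P'\leq Q$, obtaining a process $R$ with $\hat P\leq R\cong Q$; the situation is now $P\sim_S\hat P\leq R\cong Q$. For the slide past $\sim_S$ I would apply \refobs{swapprefix} to $P\sim_S\hat P\leq R$, obtaining $R'$ with $P\leq R'\sim_S R$. Since $R\cong Q$, the tail reads $R'\sim_S R\cong Q$, i.e.\ $R'\swapeq Q$, so $Q':=R'$ witnesses the one-step form. Note that \refobs{swapprefix} is only an implication, but here it is used in exactly its available direction (pushing the prefix \emph{into} the swap), so its non-reversibility --- the very point that forced the finite-prefix detour of Lemma~\ref{lem-swapprefix} --- causes no trouble; \refobs{isoprefix} is a biconditional and serves merely to commute $\cong$ with $\leq$.

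Finally I would prove the corollary by induction on the length $n$ of a chain $P=P_0\swapeq\cdots\swapeq P_n=P'\leq Q$. For $n=0$ we have $P\leq Q$ and may take $Q'=Q$. For $n\geq 1$, applying the one-step form to $P_{n-1}\swapeq P_n\leq Q$ yields some $\tilde Q$ with $P_{n-1}\leq\tilde Q\swapeq Q$; the induction hypothesis applied to the shorter chain $P_0\swapeq\cdots\swapeq P_{n-1}\leq\tilde Q$ then produces $Q'$ with $P\leq Q'\swapeqi^*\tilde Q$, and appending the final step gives $Q'\swapeqi^* Q$ as required. I do not anticipate a genuine obstacle: the argument is a short diagram chase plus a routine induction, and the only points demanding care are matching $\cong$ to \refobs{isoprefix} and $\sim_S$ to \refobs{swapprefix}, and checking that the repeated one-step applications really do reassemble into a legitimate $\swapeqi^*$-chain.
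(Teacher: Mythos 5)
Your proof is correct and matches the paper's argument: the paper derives this corollary directly from Observations~\ref{obs-isoprefix} and~\ref{obs-swapprefix} without spelling out the details, and your decomposition of a $\swapeq$-step as $\sim_S$ followed by $\cong$, the two commutations, and the induction on the length of the $\swapeqi^*$-chain are exactly the intended elaboration. Your remark that \refobs{swapprefix} is used only in its available direction (so the finite-prefix detour of \reflem{swapprefix} is not needed here) is also the right observation.
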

Likewise, from Observation~\ref{obs-isoprefix} and~\ref{obs-swapprefix} and Lemma~\ref{lem-swapprefix} we obtain:
\begin{corollary}\rm\label{cor-swapprefix-back}
If $P\leq Q' \swapeqi^* Q$ for some $P\inp\fGR$ and $Q',Q\inp\GR$, then there are
$P',P''\inp\fGR$ with $P \leq P' \swapeqi^* P'' \leq Q$.
\qed
\end{corollary}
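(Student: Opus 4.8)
The plan is to argue by induction on the length $n$ of a chain of one-step swappings $Q' = Q_0 \swapeqi Q_1 \swapeqi \cdots \swapeqi Q_n = Q$ witnessing $Q' \swapeqi^* Q$. Throughout I will use that both $\sim_S$ and $\swapeqi$ (hence $\swapeqi^*$) leave the underlying set of transitions unchanged, so that any process swapping-equivalent to a finite one is again finite; this keeps all the intermediate processes I construct inside $\fGR$. For the base case $n=0$ we have $Q' = Q$, and since $\leq$ and $\swapeqi^*$ are reflexive it suffices to take $P' = P'' = P$, using $P \leq Q' = Q$.

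For the inductive step I split off the leading swap, writing $Q' \swapeqi R \swapeqi^* Q$ with the tail $R \swapeqi^* Q$ of length $n-1$, and first push the finite prefix $P$ across the single step $Q' \swapeqi R$. Unfolding $\swapeqi$ as $\sim_S$ followed by $\cong$ yields a process $\hat R$ with $Q' \sim_S \hat R \cong R$, so that $P \leq Q' \sim_S \hat R$. Applying Lemma~\ref{lem-swapprefix} gives finite $A, B$ with $P \leq A \sim_S B \leq \hat R$; since $B \leq \hat R \cong R$, Observation~\ref{obs-isoprefix} produces a finite $C$ with $B \cong C \leq R$. Then $A \sim_S B \cong C$ is exactly $A \swapeqi C$, so we arrive at $P \leq A \swapeqi C \leq R$ with $A, C \in \fGR$.

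Next I apply the induction hypothesis to the finite prefix $C$ of $R$ and the length-$(n-1)$ chain $R \swapeqi^* Q$, obtaining finite $C', C''$ with $C \leq C' \swapeqi^* C'' \leq Q$. At this point the two halves fit together only as $P \leq A \swapeqi C \leq C' \swapeqi^* C'' \leq Q$, in which a prefix step $C \leq C'$ sits awkwardly between swaps; consolidating this into the required shape $P \leq P' \swapeqi^* P'' \leq Q$ is the crux. I would resolve it by commuting the prefix past the preceding swap: writing $A \swapeqi C$ as $A \sim_S C_0 \cong C$, Observation~\ref{obs-isoprefix} turns $C_0 \cong C \leq C'$ into $C_0 \leq C_1 \cong C'$, and then Observation~\ref{obs-swapprefix} reorders $A \sim_S C_0 \leq C_1$ into $A \leq D \sim_S C_1$ for some $D$; since $C_1 \cong C'$ is finite, $D$ is finite as well, and $D \sim_S C_1 \cong C'$ means $D \swapeqi C'$. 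Merging the now-adjacent prefixes $P \leq A \leq D$ into $P \leq D$ and the swaps $D \swapeqi C' \swapeqi^* C''$ into $D \swapeqi^* C''$ yields $P \leq D \swapeqi^* C'' \leq Q$ with $D, C'' \in \fGR$, as required.

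The main obstacle is precisely this reordering step. Observation~\ref{obs-swapprefix} is only a one-way implication --- its converse fails in general, as the text notes --- so it is essential that the direction we need (turning ``swap, then take a prefix'' into ``extend, then swap'') is exactly the one it supplies; and one must check that finiteness is not lost along the way, which holds because $\sim_S$ and $\cong$ fix the transition set and the right-hand end $C'$ is finite. Everything else --- the base case and the single-step push --- is routine given Lemma~\ref{lem-swapprefix} and Observation~\ref{obs-isoprefix}.
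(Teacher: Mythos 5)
Your proof is correct and takes essentially the same route as the paper, which derives this corollary in one line from \refobs{isoprefix}, \refobs{swapprefix} and \reflem{swapprefix}: your induction on the length of the swapping chain spells out exactly how those three ingredients combine, including the crucial commutation of the prefix step $C\leq C'$ past the preceding swap via \refobs{swapprefix} (used in the direction in which it actually holds) and the check that finiteness is preserved because $\sim_S$ and $\cong$ do not change the transition set. Nothing further is needed.
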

Together, Corollaries~\ref{cor-swapprefix} and~\ref{cor-swapprefix-back} imply the equivalence of
(\ref{swapBD}) and (\ref{swapGGS}).
\refcor{swapprefix}, in combination with the transitivity of $\leq$ and
$\swapeqi^*$, implies the transitivity of $\sqsubseteq_1^\infty$.
Moreover, by definition $\sqsubseteq_1^\infty$ is reflexive.
\begin{corollary}\rm\label{cor-swapprefix-preorder}
$\sqsubseteq_1^\infty$ is a preorder on $\GR$. Hence $\equiv_1^\infty$
  is an equivalence relation.
\end{corollary}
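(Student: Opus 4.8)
The plan is to verify reflexivity and transitivity of $\sqsubseteq_1^\infty$ directly from definition (\ref{swapGGS}), and then to observe that $\equiv_1^\infty$ coincides with the symmetric kernel of this preorder, which is automatically an equivalence relation. Reflexivity is immediate: to check $P \sqsubseteq_1^\infty P$, I would take any finite prefix $P''\leq P$ and instantiate the existential in (\ref{swapGGS}) with $P'=Q'=P''$; the three requirements $P''\leq P''$, $P''\swapeqi^* P''$ and $P''\leq P$ then hold by reflexivity of $\leq$ and of $\swapeqi^*$.

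For transitivity, assume $P \sqsubseteq_1^\infty Q$ and $Q \sqsubseteq_1^\infty R$, and fix a finite prefix $P''\leq P$. The first hypothesis supplies finite $P_1,Q_1$ with $P''\leq P_1 \swapeqi^* Q_1 \leq Q$, and since $Q_1$ is a finite prefix of $Q$, the second supplies finite $Q_2,R_2$ with $Q_1\leq Q_2 \swapeqi^* R_2 \leq R$. Concatenating yields the chain $P''\leq P_1 \swapeqi^* Q_1 \leq Q_2 \swapeqi^* R_2 \leq R$. The one nontrivial step --- and the intended role of \refcor{swapprefix} --- is to collapse the middle fragment $P_1 \swapeqi^* Q_1 \leq Q_2$, where a swap is followed by an extension, into an extension followed by a swap. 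Applying \refcor{swapprefix} to this fragment produces a $Q'$ with $P_1 \leq Q' \swapeqi^* Q_2$. Transitivity of $\leq$ then gives $P''\leq Q'$, and transitivity of $\swapeqi^*$ gives $Q' \swapeqi^* R_2$, so $P''\leq Q' \swapeqi^* R_2 \leq R$, which is exactly what (\ref{swapGGS}) demands for $P \sqsubseteq_1^\infty R$. It remains only to confirm $Q'\in\fGR$: since swapping alters only the flow relation and leaves the transition set intact, $\swapeqi^*$ preserves finiteness, so $Q' \swapeqi^* Q_2 \in\fGR$ forces $Q'$ to be finite.

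Finally, for the equivalence-relation claim I would use the equivalence of (\ref{swapBD}) and (\ref{swapGGS}) already obtained from Corollaries \ref{cor-swapprefix} and \ref{cor-swapprefix-back}. Read through this equivalence, \refdf{BD-swapping} says precisely that $P \equiv_1^\infty Q$ holds iff $P \sqsubseteq_1^\infty Q$ and $Q \sqsubseteq_1^\infty P$; thus $\equiv_1^\infty$ is the symmetric kernel ${\sqsubseteq_1^\infty}\cap{(\sqsubseteq_1^\infty)}^{-1}$ of a preorder, and such a kernel is reflexive, symmetric and transitive, hence an equivalence relation. The only genuine obstacle in the whole argument is the swap/extension reordering, which is discharged by \refcor{swapprefix}; the rest is bookkeeping with the transitivity of $\leq$ and $\swapeqi^*$ together with the preservation of finiteness under swapping.
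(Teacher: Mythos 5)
Your proof is correct and follows essentially the same route as the paper: reflexivity directly from the definition, transitivity by chaining the two witnessing decompositions and using \refcor{swapprefix} to commute the middle $\swapeqi^*\;\leq$ fragment, and the equivalence-relation claim via the kernel of the preorder together with the established equivalence of (\ref{swapBD}) and (\ref{swapGGS}). You have merely spelled out the bookkeeping that the paper leaves implicit.
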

It follows that $\sqsubseteq_1^\infty$ induces a partial order on
BD-processes, and thereby
concepts of a \emph{maximal} and a \emph{largest} BD-process.%
\footnote{A \emph{preorder} is a relation that is reflexive and transitive; it is an
  \emph{equivalence relation} if it moreover is symmetric, and a \emph{partial order} if it moreover
  is antisymmetric. Given a preorder $\sqsubseteq$, its \emph{kernel} is the equivalence relation
  $\equiv$ defined by $P \equiv Q$ iff $P \sqsubseteq Q \wedge Q \sqsubseteq P$. Moreover, the
  \emph{induced} partial order $\leq$ on the $\equiv$-equivalence classes is given by
  $[P] \leq [Q]$ iff $P \sqsubseteq Q$; it is easy to check that this is independent of the choices
  of representatives $P$ and $Q$ within the $\equiv$-equivalences classes $[P]$ and $[Q]$.

  Here $\equiv_1^\infty$ is the kernel of $\sqsubseteq_1^\infty$, and ``maximal'' or ``largest''
  refers to the induced partial order on BD-processes.
}

\subsection{An alternative characterisation of swapping equivalence}\label{sec-characterisation}

Let $P\in\GR$. The set $\BDify{P}$ of \emph{finite BD-approximations} of
$P$ is the smallest set of finite GR-processes that contains all
finite prefixes of $P$ and is closed under $\equiv_1$ and taking prefixes.
(By Observation~\ref{obs-iso-swap} it therefore is also closed under $\cong$.)
Thus, it is the smallest subset of $\GR$ satisfying
\begin{itemize}
\item if $P'\in\fGR$ and $P' \leq P$ then $P'\in \BDify{P}$,
\item if $P' \equiv_1 Q\in \BDify{P}$ then $P'\in\BDify{P}$,
\item if $P' \leq Q\in\BDify{P}$ then $P'\in \BDify{P}$.
\end{itemize}

\begin{proposition}\rm
$P \mathbin{\sqsubseteq_1^\infty} Q \Leftrightarrow \BDify{P} \mathbin{\subseteq} \BDify{Q}$.
So $P \mathbin{\swapeq^\infty} Q \Leftrightarrow \BDify{P} \mathbin= \BDify{Q}$.
\end{proposition}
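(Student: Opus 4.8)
The plan is to prove the biconditional $P \sqsubseteq_1^\infty Q \Leftrightarrow \BDify{P} \subseteq \BDify{Q}$ by two separate inclusions, and then to obtain the second claim, $P \equiv_1^\infty Q \Leftrightarrow \BDify{P} = \BDify{Q}$, for free: since $\equiv_1^\infty$ is by definition the kernel of the preorder $\sqsubseteq_1^\infty$ (Corollary~\ref{cor-swapprefix-preorder} and the footnote), $P \equiv_1^\infty Q$ means $P \sqsubseteq_1^\infty Q$ and $Q \sqsubseteq_1^\infty P$, which under the biconditional becomes $\BDify{P} \subseteq \BDify{Q}$ and $\BDify{Q} \subseteq \BDify{P}$, i.e.\ $\BDify{P} = \BDify{Q}$. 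Both implications of the biconditional are proved by induction on the inductive definition of $\BDify{\cdot}$ as the smallest set of finite processes containing the finite prefixes of its argument and closed under $\equiv_1$ and under taking prefixes.

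For $\sqsubseteq_1^\infty \Rightarrow \subseteq$, assume $P \sqsubseteq_1^\infty Q$. Because $\BDify{P}$ is the smallest set with those closure properties generated by the finite prefixes of $P$, and $\BDify{Q}$ is already closed under $\equiv_1$ and under prefixes, it suffices to show that every finite prefix $P'' \leq P$ lies in $\BDify{Q}$. Fixing such a $P''$ and instantiating (\ref{swapGGS}) at it yields $P',Q' \in \fGR$ with $P'' \leq P' \swapeqi^* Q' \leq Q$. Then $Q' \in \BDify{Q}$ as a finite prefix of $Q$; running back along the $\swapeqi^*$-chain via closure under $\equiv_1$ gives $P' \in \BDify{Q}$; and closure under prefixes applied to $P'' \leq P'$ gives $P'' \in \BDify{Q}$, as desired.

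For the converse, the crux is the claim that $R \in \BDify{Q}$ entails the existence of $R',Q' \in \fGR$ with $R \leq R' \swapeqi^* Q' \leq Q$. Granting it, I take an arbitrary finite prefix $P'' \leq P$; it lies in $\BDify{P} \subseteq \BDify{Q}$, so the claim supplies exactly the witnesses $P',Q'$ required by (\ref{swapGGS}), establishing $P \sqsubseteq_1^\infty Q$. To prove the claim I introduce $\mathcal{C} := \{R \in \fGR \mid \origexists R',Q' \in \fGR.\ R \leq R' \swapeqi^* Q' \leq Q\}$ and show $\BDify{Q} \subseteq \mathcal{C}$ by verifying that $\mathcal{C}$ satisfies the three generating conditions of $\BDify{Q}$. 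That $\mathcal{C}$ contains every finite prefix of $Q$ (take $R' = Q' = R$) and is closed under taking prefixes (transitivity of $\leq$ on the left witness) is immediate.

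The main obstacle is closure of $\mathcal{C}$ under $\equiv_1$. Here one is given $R \equiv_1 S$ with $S \leq S' \swapeqi^* Q' \leq Q$, hence $R \swapeqi^* S \leq S'$ --- a swap followed by an extension --- whereas membership in $\mathcal{C}$ demands the extension first. This is precisely what Corollary~\ref{cor-swapprefix} resolves: applied to $R \swapeqi^* S \leq S'$ it produces some $T$ with $R \leq T \swapeqi^* S'$, and $T \in \fGR$ because $\swapeqi^*$ leaves the (finite) transition set of $S'$ unchanged. Concatenating $T \swapeqi^* S' \swapeqi^* Q'$ then yields $R \leq T \swapeqi^* Q' \leq Q$, so $R \in \mathcal{C}$. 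With this reordering the induction closes, giving $\BDify{Q} \subseteq \mathcal{C}$ and thereby the backward implication; together with the forward one the proposition and its corollary follow.
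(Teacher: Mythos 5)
Your proof is correct and takes essentially the same route as the paper's: the paper's (very terse) proof likewise rests on the identity $\BDify{Q} = \{R \inp \fGR \mid \exists R',Q'\inp\fGR.\ R \leq R' \swapeqi^* Q' \leq Q\}$, obtained via \refcor{swapprefix} to commute a swap past a prefix extension, after which the biconditional is read off from \refdf{BD-swapping-alt}. You have merely spelled out the two inclusions (minimality of $\BDify{P}$ for the forward direction, closure of your set $\mathcal{C}$ for the backward one) that the paper leaves implicit.
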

\begin{proof}
$\BDify{P} = \{P''' \mathbin\in \fGR \mid \exists P'',P'\mathbin\in\fGR. P''' \leq P'' \swapeqi^* P' \leq P\}$
by \refcor{swapprefix}. Using this, the result follows directly from \refdf{BD-swapping-alt}.
\qed
\end{proof}

\section{Conflicts in place/transition systems}\label{sec-conflict}

We recall the notion of conflict introduced in
\cite{goltz86howmany}.
It formalises the notion of conflict alluded to in \cite[p.~23]{Reisig85a}.

\begin{define}{
  Let $N \mathbin= (S, T, F, M_0)$ be a net and $M \in
  \bbbn^S\!$.
}\label{df-semanticconflict}
\item
  A finite, non-empty multiset $G \in \bbbn^T$ is in
  \defitem{(semantic) conflict} in $M$ iff\vspace{2pt}\\
  $\neg M\goesto[G] ~~\wedge~~ \forall t \in G. M\goesto[G \restrictedto \{t\}]$.
\item
  $N$ is \defitem{(semantically) conflict-free} iff
  no finite, non-empty multiset $G \in \bbbn^T$ is in semantic conflict in any
  $M$ with $M_0 \goesto[] M$.
\item
  $N$ is \defitem{binary-conflict-\!-free} iff
  no multiset $G \in \bbbn^T$ with $|G| = 2$ is in semantic conflict in any
  $M$ with $M_0 \goesto[] M$.
\end{define}
Thus, $N$ is binary-conflict-\!-free iff whenever two different
transitions $t$ and $u$ are enabled at a reachable marking $M$, then
also the step $\{t,u\}$ is enabled at $M$.
The above concept of (semantic) conflict-freeness formalises the intuitive notion that
there are no choices to resolve.

\begin{trivlist}
\item[\hspace{\labelsep}\bf Remark:]
In a net such as displayed in \reffig{persistent},
the multiset $\{t,t\}$ is never enabled.
For this reason the multiset $\{t,t,u\}$ does not count as being in
conflict, even though it is never enabled. However, its subset
$\{t,u\}$ is in conflict.
\end{trivlist}

\begin{figure}[ht]
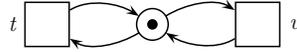

\vspace*{-5ex}
  \begin{center}
    \begin{petrinet}(6,2)
      \P(3,1):p;

      \ul(1,1):t::t;
      \u(5,1):u::u;

      \AA p->t; \AA t->p;
      \AA p->u; \AA u->p;
    \end{petrinet}
  \end{center}
  \vspace{-4ex}
  \caption{A net which is persistent but not binary-conflict--free}
  \label{fig-persistent}
\end{figure}

\noindent
A number of alternative concepts of conflict and conflict-freeness have been
contemplated in the Petri net community.

A Petri net $N$ is called \emph{persistent} \cite{KM69,LR78} if for every
marking $M$ with $M_0 \goesto[] M$ and every $t,u\in T$ with $t\neq u$, $M
\goesto[t]$ and $M\goesto[u]$, we have $M\goesto[tu]$; in other words, if any
transition $u$ that is enabled in a reachable marking will still be enabled after
firing any other transition $t$. Trivially, a net that is binary-conflict-\!-free
is also persistent. The net of \reffig{persistent}, on the other hand, is
persistent but not binary-conflict-\!-free.

\begin{figure}[ht]
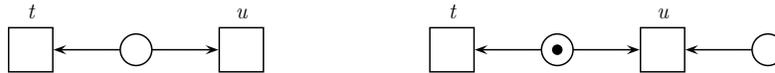

\vspace*{-3ex}
  \begin{center}
    \begin{petrinet}(16,2)
      \p(3,1):p;

      \ut(1,1):t::t;
      \ut(5,1):u::u;

      \a p->t;
      \a p->u;

      \P(11,1):p2;
      \p(15,1):q2;

      \ut(9,1):t2::t;
      \ut(13,1):u2::u;

      \a p2->t2;
      \a p2->u2;
      \a q2->u2;
    \end{petrinet}
  \end{center}
  \vspace{-4ex}
  \caption{Two nets with structural conflict, but no choices to resolve.}
  \label{fig-conflict-free}
\end{figure}

  A pair of different transitions in a net that share a preplace can be called a
  \emph{structural conflict}.
  As illustrated in \reffig{conflict-free}, the presence of a structural conflict
  does not imply that there are choices to resolve.
  A net that is free of structural conflicts is certainly conflict-free, but \reffig{conflict-free}
  shows that the reverse does not hold.

  A triple $(M,t,u)$ of a reachable marking $M$ and two different transitions $t$ and $u$
  with $M {\goesto[t]}$, $M {\goesto[u]}$ and $\precond{t} \cap \precond{u} \neq \emptyset$
  could be called a \emph{reachable structural conflict}. This constitutes a
  middle ground between semantic and structural conflict. The nets of \reffig{conflict-free} do not
  have a reachable structural conflict. However, the net of \reffig{reachable-conflict-free},
  although semantically conflict-free, does have a reachable structural conflict.
\begin{figure}[ht]
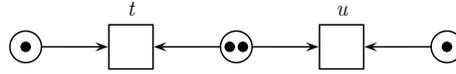

  \vspace*{-5ex}
  \begin{center}
    \begin{petrinet}(10,2)
      \P(1,1):p;
      \p(5,1):q;
      \pscircle*(5.125,1){0.1}
      \pscircle*(4.875,1){0.1}
      \P(9,1):r;

      \ut(3,1):t::t;
      \ut(7,1):u::u;

      \a p->t; \a q->t;
      \a q->u; \a r->u;
    \end{petrinet}
  \end{center}
  \vspace{-4ex}
  \caption{A net with a reachable structural conflict, but no choices to resolve.}
  \label{fig-reachable-conflict-free}
\end{figure}

Landweber and Robertson \cite{LR78} define a Petri net to be conflict-free
``if every place which is an input of more than one transition is on a self-loop with each such transition.''
This is an extension of the concept structural conflict-freeness that is closer to persistence.
It classifies the net of 
\reffig{persistent} as conflict-free and the nets of \reffig{conflict-free} as having conflicts.
Hence, this notion, just as persistence and structural conflict-freeness,
does not formalise the intuitive concept ``no choices to resolve''.

We proposed in \cite{glabbeek11ipl} a class of \mbox{P\hspace{-1pt}/T} systems
where (semantic) conflict-freeness coincides with the absence of reachable structural conflicts.
We called this class of nets \defitem{structural conflict nets}.
For a net to be a structural conflict net, we required that two
transitions sharing a preplace will never occur both in one step.

\begin{define}{
  Let $N \mathbin= (S, T, F, M_0)$ be a net.
}\label{df-structuralconflict}
\item[]
  $N$ is a \defitem{structural conflict net} iff
  $\forall t, u.
    (M_0 \goesto[]\;\goesto[\{t, u\}]) \implies
    \precond{t} \cap \precond{u} = \emptyset$.
\end{define}
Note that this excludes self-concurrency from the possible behaviours in a
structural conflict net: as in our setting every transition has at least one
preplace, $t = u$ implies $\precond{t} \cap \precond{u} \ne \emptyset$.  Also
note that in a structural conflict net a non-empty, finite multiset $G$ is in
conflict in a reachable marking $M$ iff $G$ is a set and two distinct transitions in $G$
are in conflict in $M$. Hence a structural conflict net is conflict-free if
and only if it is binary-conflict-\!-free.  Moreover, two transitions enabled
in $M$ are in (semantic) conflict iff they share a preplace.

Trivially, the class of structural conflict nets includes the class of safe nets, in
which no reachable marking assigns multiple tokens to the same place. It also includes the non-safe
net of \reffig{unsafe}, as well as the \emph{buffer synchronised systems of sequential machines} from
\cite{reisig82buffersync} and the \emph{locally sequential globally asynchronous nets (LSGA nets)} of
\cite{GGS13}, in which asynchronous communication is modelled by buffer-places between
sequential components that may collect arbitrarily many tokens.

\section{A structural conflict net having a largest BD-process is conflict-free}
\label{sec-IPLresult}

The result announced in this section---that each structural conflict net having a
$\sqsubseteq_1^\infty$-largest BD-process must be conflict-free---is in essence obtained in \cite{glabbeek11ipl}.
However, there we had not defined the order $\sqsubseteq_1^\infty$, and thus neither the
corresponding notion of a $\sqsubseteq_1^\infty$-largest BD-process. Instead we used different
terminology, and the work in this section merely consists of relating the terminology of
\cite{glabbeek11ipl} to the one of the present paper.

In \cite{glabbeek11ipl} a \emph{partial BD-run} of a net $N$ is defined as a $\swapeq^*$-equivalence class
of finite GR-processes of $N$.\footnote{It is easy to see that on finite GR-processes the relations
  $\swapeq^\infty$ and $\swapeq^*$ coincide. Hence a partial BD-run is the same as a finite
  BD-process, i.e., an equivalence class $\BDinf{P}$ with $P$ a finite GR-process.
  We do not use this fact further on.} Let $\BD{P}$ be the partial BD-run containing $P$.
The prefix/extension relation $\leq$ on $\fGR$ from \refdf{extension} is
lifted to partial BD-runs by \mbox{$\BD{P'} \leq \BD{P}$ iff $\PP' \swapeq^* \QQ' \leq \QQ \swapeq^* \PP$}
for some $\QQ',\QQ\in\fGR$. By \refcor{swapprefix}, $\leq$ is a partial order on partial BD-runs, and
$\BD{P'} \leq \BD{P}$ iff \mbox{$\PP' \leq \QQ \swapeq^* \PP$} for some $\QQ\in\fGR$.
Moreover, $\BD{P'} \leq \BD{P}$ iff $P' \sqsubseteq_1^\infty P$.

In \cite{glabbeek11ipl} a \emph{BD-run} of a net $N$ is defined as a prefix-closed and directed set of
partial BD-runs of $N$. Here we define the notion of a \emph{collapsed BD-run}, or BD$^\dagger$-run.

\begin{definition}\rm
A \emph{BD$^\dagger$-run} of a net $N$ is a subset $\mathcal{R}$ of $\fGR$ that is prefix-closed and
closed under $\swapeq^*$, and satisfies
\[P,Q\in\mathcal{R} \Rightarrow \exists P',Q'\in\mathcal{R}. P \leq P' \swapeq^* Q' \geq Q\;.\]
  \end{definition}
Note that a BD$^\dagger$-run is a set of finite GR-processes, whereas a BD-run is a set of sets of
finite GR-processes. We proceed to show that the two notions have the same information content.
For a BD-run $R$, let $R^\dagger:=\{P\inp\fGR \mid \BD{P} \inp R\}$.
Trivially, $R^\dagger$ is a BD$^\dagger$-run. Moreover, $R_1 \subseteq R_2$ implies \plat{$R_1^\dagger \subseteq R_2^\dagger$}.

Conversely, for $\mathcal{R}$ a BD$^\dagger$-run, let $\BD{\mathcal{R}} := \{\BD{P} \mid P \in \mathcal{R}\}$.
Trivially,  $\BD{\mathcal{R}}$ is a BD-run. Moreover, $\mathcal{R}_1 \subseteq \mathcal{R}_2$
implies $\BD{\mathcal{R}_1} \subseteq \BD{\mathcal{R}_2}$. Also note that
$\BD{R^\dagger} = R$ for any BD-run $R$, and $\BD{\mathcal{R}}^\dagger = \mathcal{R}$ for any
BD$^\dagger$-run $\mathcal{R}$. Thus we have a $\subseteq$-preserving bijective correspondence
between BD-runs and BD$^\dagger$-runs. It follows that a net has a unique maximal BD-run
iff it has a unique maximal BD$^\dagger$-run.\footnote{Exactly as in the proof of
  \reflem{towardsinfinity} in the next section it follows that each BD-run is a prefix of a maximal
  BD-run. Hence a unique maximal BD-run is the same as largest BD-run. The same applies to
  BD$^\dagger$-runs.}

\cite[Section 5]{glabbeek11ipl} defines the concept of an \emph{FS-run}---an FS run is a
certain set of sets of firing sequences---and establishes a $\subseteq$-preserving bijective
correspondence between FS-runs and BD-runs. It follows that a net has a unique maximal FS-run
iff it has a unique maximal BD-run.

The set of finite prefixes of a GR-process $P$ is directed: for $P_1,P_2\in\fGR$ with
$P_1\leq P$ and $P_2 \leq P$, there is a process $P_3\in\fGR$ with $P_1 \leq P_3 \leq P$ and
$P_2 \leq P_3$. Just take as transitions of $P_3$ the union of the transitions from $P$ that occur
in $P_1$ or $P_2$.

\begin{lemma}\rm\label{lem-largest}
  If a net has a $\sqsubseteq_1^\infty$-largest BD-process then it has a largest BD$^\dagger$-run.
\end{lemma}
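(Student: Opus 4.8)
The plan is to exhibit an explicit candidate for the largest BD$^\dagger$-run. Fix a GR-process $P\inp\GR$ whose class $\BDinf{P}$ is the $\sqsubseteq_1^\infty$-largest BD-process, so that $Q \sqsubseteq_1^\infty P$ for every $Q\inp\GR$. I claim that the set $\mathcal{R} := \BDify{P}$ of finite BD-approximations of $P$ is the largest BD$^\dagger$-run. Two things must be checked: that $\BDify{P}$ is itself a BD$^\dagger$-run, and that it contains every BD$^\dagger$-run.

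That $\BDify{P}$ is prefix-closed is immediate from its defining closure conditions, and closure under $\swapeq^*$ follows since $\BDify{P}$ is closed under $\equiv_1$ and hence under its reflexive--transitive closure $\swapeq^*$ (walk back any finite $\equiv_1$-chain ending inside $\BDify{P}$). The remaining directedness condition is the crux. Given $P_1,Q_1\inp\BDify{P}$, I would use the characterisation $\BDify{P} = \{R\inp\fGR \mid \exists A,B\inp\fGR.\ R \leq A \swapeq^* B \leq P\}$ established in the proof of the Proposition of \refsec{characterisation} to obtain finite $A,B,C,D$ with $P_1 \leq A \swapeq^* B \leq P$ and $Q_1 \leq C \swapeq^* D \leq P$. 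Since the finite prefixes of $P$ are directed, pick a finite prefix $E$ of $P$ with $B \leq E$ and $D \leq E$. Now push the swaps past the extensions: from $A \swapeq^* B \leq E$, \refcor{swapprefix} yields some $A'$ with $A \leq A' \swapeq^* E$, and similarly a $C'$ with $C \leq C' \swapeq^* E$; both are finite since they are $\swapeq^*$-related to the finite $E$. As $A' \swapeq^* E \leq P$ and $C' \swapeq^* E \leq P$, both $A'$ and $C'$ lie in $\BDify{P}$, and $P_1 \leq A' \swapeq^* E \swapeq^* C' \geq Q_1$ (using transitivity and symmetry of $\swapeq^*$) witnesses directedness with $P' := A'$ and $Q' := C'$.

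For largeness, let $\mathcal{R}'$ be any BD$^\dagger$-run and $R\inp\mathcal{R}'$. Then $R\inp\fGR$, and since $\BDinf{P}$ is $\sqsubseteq_1^\infty$-largest we have $R \sqsubseteq_1^\infty P$. Instantiating the defining clause $(\ref{swapGGS})$ of $\sqsubseteq_1^\infty$ at the finite prefix $R'' := R \leq R$ gives finite $R',P'$ with $R \leq R' \swapeq^* P' \leq P$, which is exactly the statement $R\inp\BDify{P}$. Hence $\mathcal{R}' \subseteq \BDify{P}$, so $\BDify{P}$ is the largest BD$^\dagger$-run.

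I expect the directedness step to be the main obstacle: closure of $\BDify{P}$ under $\swapeq^*$ and prefixes does not by itself supply a \emph{common} swap-and-extend witness for two approximations, so one first has to merge $B$ and $D$ into a single finite prefix $E$ of the shared process $P$ and then commute the swaps past this extension via \refcor{swapprefix}. Verifying that the resulting $A',C'$ are finite and land back in $\BDify{P}$ is exactly where the earlier observations and corollaries carry the weight, whereas the largeness half is essentially a one-line unfolding of the definition of a $\sqsubseteq_1^\infty$-largest process.
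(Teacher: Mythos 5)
Your proof is correct and takes essentially the same approach as the paper: the paper's candidate is the set of \emph{all} finite GR-processes of the net (which, as your largeness argument in effect shows, coincides with $\BDify{P}$ under the hypothesis, so largeness becomes immediate), and its directedness argument is exactly yours --- decompose each process via $\sqsubseteq_1^\infty$ into a swap-and-extend witness ending in a finite prefix of $P$, merge the two prefixes using directedness of the finite prefixes of $P$, and commute the swaps past the extension with \refcor{swapprefix}. The only difference is cosmetic: you invoke the largest-BD-process hypothesis in the largeness half rather than in the directedness half.
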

\begin{proof}
  Let $\BDinf{P}$ be the $\sqsubseteq_1^\infty$-largest BD-process of a net $N$.
  We claim that the set of \emph{all} finite GR-processes of $N$ is a BD$^\dagger$-run.
  Clearly, it is then also the largest.

  Trivially, $\fGR$ is prefix-closed and closed under $\swapeq^*$.
  Now suppose $P_1,Q_1\linebreak[1]\in\fGR$. Since $P_1 \sqsubseteq_1^\infty P$ and $P_1 \leq P_1$ one
  has $P_1 \leq P_2 \swapeq^* P_3 \leq P$ for some $P_2,P_3\in\fGR$.
  Likewise $Q_1 \leq Q_2 \swapeq^* Q_3 \leq P$ for some $Q_2,Q_3\in\fGR$.
  Using that the set of prefixes of $P$ is directed, let $P_4\in\fGR$ be such that
  $P_3 \leq P_4 \leq P$ and $Q_3 \leq P_4$. Now \refcor{swapprefix} yields
  $P_1 \leq \swapeq^* P_4 \swapeq^* \geq Q_1$, which needed to be established.
\qed
\end{proof}

\begin{theorem}\rm\label{thm-conflictfree}
  Let $N$ be a structural conflict net.

  If $N$ has a $\sqsubseteq_1^\infty$-largest BD-process then $N$ is conflict-free.
\end{theorem}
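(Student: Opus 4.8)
The plan is to prove the statement by combining the bridge \reflem{largest}, which passes from a largest BD-process to a BD$^\dagger$-run, with the result recalled from \cite{glabbeek11ipl}. The guiding idea is that the defining directedness clause of a BD$^\dagger$-run is precisely an assertion about \emph{common extensions} of partial BD-runs, while the presence of a conflict in a structural conflict net, by \cite{glabbeek11ipl}, is witnessed by a pair of finite processes admitting \emph{no} common extension; these two facts cannot coexist.

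First I would assume that $N$ has a $\sqsubseteq_1^\infty$-largest BD-process and apply \reflem{largest}. Its proof delivers slightly more than the statement: the whole set $\fGR$ of finite GR-processes of $N$ is itself a BD$^\dagger$-run. From the defining clause of a BD$^\dagger$-run, instantiated at $\mathcal{R}=\fGR$, I then obtain directedness: for all $P,Q\in\fGR$ there exist $P',Q'\in\fGR$ with $P \leq P' \swapeq^* Q' \geq Q$.

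Next I would reinterpret this in terms of partial BD-runs, using the identities set up earlier in this section. Since $P\leq P'$ gives $\BD{P}\leq\BD{P'}$, since $P'\swapeq^* Q'$ gives $\BD{P'}=\BD{Q'}$, and since $Q\leq Q'$ gives $\BD{Q}\leq\BD{Q'}$, the partial BD-run $\BD{P'}$ is a common extension of $\BD{P}$ and $\BD{Q}$. Hence \emph{any} two finite BD-processes of $N$ have a common extension. By the contrapositive of the result of \cite{glabbeek11ipl}---that a structural conflict net containing a conflict has \emph{some} pair of finite BD-processes without a common extension---it follows that $N$ contains no conflict, i.e.\ $N$ is conflict-free.

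The step I expect to be most delicate is not any calculation but the bookkeeping across the several notions of run: I must check that ``common extension'' as used in \cite{glabbeek11ipl} matches the order $\leq$ on partial BD-runs (equivalently $\sqsubseteq_1^\infty$ on finite GR-processes, via $\BD{P'}\leq\BD{P}\Leftrightarrow P'\sqsubseteq_1^\infty P$), and that a \emph{finite} common extension---which is all that the directedness of $\fGR$ produces---already suffices to rule out a pair with no common extension whatsoever. Once this dictionary is in place the argument is immediate, since the genuinely hard combinatorics about swapping and conflict have been discharged already in \reflem{largest} and in \cite{glabbeek11ipl}.
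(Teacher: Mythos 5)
Your proof is correct and follows essentially the same route as the paper: both hinge on \reflem{largest} to pass from a $\sqsubseteq_1^\infty$-largest BD-process to the fact that the finite GR-processes form a single (hence directed) BD$^\dagger$-run, and then invoke the result of \cite{glabbeek11ipl}. The only difference is cosmetic: the paper cites that result in its ``unique maximal FS-run'' form and chains through the BD$^\dagger$-run/BD-run/FS-run correspondences, whereas you use the equivalent ``no common extension'' formulation (which the paper itself derives in the closing remark of \refsec{IPLresult}), so the two arguments have the same content.
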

\begin{proof}
\cite[Theorem 6]{glabbeek11ipl} says that if a structural conflict net $N$ has exactly one maximal
FS-run then $N$ is conflict-free.

Now suppose $N$ has a $\sqsubseteq_1^\infty$-largest BD-process. By \reflem{largest} it has a unique
maximal BD$^\dagger$-run. Hence it has a unique maximal BD-run, and a unique maximal FS-run.
It follows that $N$ is conflict-free.
\qed
\end{proof}
Note that \cite[Theorem 6]{glabbeek11ipl} can be reformulated as saying that a structural conflict
net $N$ that is not conflict-free fails to have a unique maximal BD-run. This implies that the set
of all partial BD-runs of $N$ fails to be a BD-run, and must hence fail to be directed.
This in turn implies that there are two finite BD-processes of $N$ without a common extension.

\section{A conflict-free structural conflict net has a largest BD-process}\label{sec-largestprocess}

In this section we prove the main result of this paper (\refthm{sclargest}),
namely that each conflict-free structural conflict net has a largest BD-process with respect to the
order $\sqsubseteq_1^\infty$. We make use of a labelled transition relation between the processes of
a given net. The fact that we are dealing with a structural conflict net is used only at the end of
the proof of \refthm{sclargest}.

Let $\PP = ((\SS, \TT, \FF, \MM_0), \pi)$ and $\PP' = ((\SS', \TT\,', \FF\,', \MM_0'), \pi')$
be GR-processes of a net $N = (S, T, F, M_0)$.
Henceforth, we will write $P'\goesto[a]P$ with $a\inp T$ a transition of the underlying net,
if $P'\mathbin\leq P$ and \plat{$\TT = \TT'\dcup\{t\}$} for some $t$ with $\pi(t)=a$.
Let $\mathcal{P}_0(N)$ be the set of \emph{initial processes} of a net $N$: those with an
empty set of transitions.
A process $P_0 \in \mathcal{P}_0(N)$ has exactly one place for each token in the
initial marking of $N$; two processes in $\mathcal{P}_0(N)$ differ only in the names of these places.
Now for each finite process $P$ of $N$, having $n$ transitions, there is a sequence
$P_0 \goesto[a_1] P_1 \goesto[a_2] \ldots \goesto[a_n] P_n$ with
$P_0\in\mathcal{P}_0(N)$ and $P_n=P$.

For $P=((\SS,\TT,\FF,\MM_0),\pi)$ a finite GR-process of a net $N=(S,T,F,M_0)$, we write
\plat{$\widehat P$} for the marking $\pi(P^\circ)\in \bbbn^S$.
The following observations describe a \emph{bisimulation}
between the above transition relation on the processes of a net, and the one on its markings.

\begin{observation}\rm\label{obs-bisimulation}
  Let $N=(S,T,F,M_0)$ be a net, $a\inp T$, and
  $P,Q\in\fGR$.\vspace{-1ex}
\begin{itemize}
\item[(a)]
$\mathcal{P}_0(N)\neq\emptyset$ and if $P\inp\mathcal{P}_0$ then $\widehat P = M_0$.
\item[(b)]
If $P \goesto[a] Q$ then $\widehat P \goesto[a] \widehat Q$.
\item[(c)]
If $\widehat P \goesto[a] M$ then there is a $Q$ with  $P \goesto[a] Q$ and $\widehat Q = M$.
\item[(d)]
$\widehat P$ is \emph{reachable} in the sense that $M_0 \goesto[] \widehat{P}$.
(This follows from (a) and (b).)
\end{itemize}
\end{observation}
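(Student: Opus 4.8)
The plan is to treat the four parts separately, noting that (a) and (d) are bookkeeping while the two bisimulation clauses (b) and (c) carry the content, and that both rest on a single structural fact about one-transition extensions. Suppose $P\goesto[a]Q$, witnessed by a transition $t$ with $\pi(t)=a$ and $\TT_Q=\TT_P\dcup\{t\}$. Because $P\leq Q$ and $t$ is the only new transition, every preplace of $t$ already lies in $P$ and has empty postset there (by place-unbranchedness its unique successor in $Q$ is the new $t\notin\TT_P$); hence $\precond t\subseteq P^\circ$. The postplaces $\postcond t$ are precisely the freshly added places, disjoint from $\SS_P$, while every other end of $P$ remains an end of $Q$. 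This yields
\[ Q^\circ=(P^\circ\setminus\precond t)\dcup\postcond t,\qquad \precond t\subseteq P^\circ, \]
which I would establish first; clauses (b) and (c) then reduce to applying $\pi$ together with the mapping conditions $\pi(\precond t)=\precond a$ and $\pi(\postcond t)=\postcond a$ from \refdf{process}.

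For (a) I would exhibit an initial process directly: let $\SS$ contain, for each place $s$ of $N$, exactly $M_0(s)$ places mapped to $s$, with $\TT=\FF=\emptyset$ and $\MM_0(s')=1$ throughout. All clauses of \refdf{process} are then vacuous or immediate, so $\mathcal{P}_0(N)\neq\emptyset$; and for any $P_0\in\mathcal{P}_0$ the absence of transitions forces $P_0^\circ=\SS=\MM_0$, whence $\widehat{P_0}=\pi(\MM_0)=M_0$. For (b), applying $\pi$ to the displayed identity and using additivity of $\pi$ on the disjoint pieces gives $\widehat Q=\pi(P^\circ)-\pi(\precond t)+\pi(\postcond t)=(\widehat P-\precond a)+\postcond a$, while enabledness is immediate from $\precond a=\pi(\precond t)\subseteq\pi(P^\circ)=\widehat P$; by \refdf{firing} this is exactly $\widehat P\goesto[a]\widehat Q$.

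The real work is (c), where the extension must be constructed. From $\widehat P\goesto[a]M$ we have $\precond a\subseteq\widehat P$ and $M=(\widehat P-\precond a)+\postcond a$. Since $\widehat P(s)=|\pi^{-1}(s)\cap P^\circ|\geq\precond a(s)$ for every $s$, I can choose a \emph{set} $C\subseteq P^\circ$ of distinct process-places with $\pi(C)=\precond a$, and then form $Q$ by adjoining a single transition $t$ with $\pi(t)=a$, setting $\FF(s',t)=1$ for $s'\in C$, and for each $s$ adjoining $F(a,s)$ fresh places mapped to $s$ with $t$ as their only predecessor. I expect the main obstacle to be the verification that this $Q$ is a bona fide GR-process per \refdf{process}: the place-unbranchedness $|\precond{s'}|\le1\ge|\postcond{s'}|$ (which is precisely why $C$ must be a set of distinct places, all of which had empty postset in $P$), acyclicity and the finite-predecessor condition, and the two mapping conditions for $t$; together with $P\leq Q$ in the sense of \refdf{extension}, so that indeed $P\goesto[a]Q$. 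Granting all this, part (b) applied to $P\goesto[a]Q$ gives $\widehat Q=(\widehat P-\precond a)+\postcond a=M$, as required.

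Finally (d) follows by induction, as already indicated before the observation: every finite $P$ admits a chain $P_0\goesto[a_1]\cdots\goesto[a_n]P$ with $P_0\in\mathcal{P}_0(N)$, and (a) gives $\widehat{P_0}=M_0$ while (b) propagates each firing, so that $M_0=\widehat{P_0}\goesto[a_1]\cdots\goesto[a_n]\widehat P$ and hence $M_0\goesto[]\widehat P$.
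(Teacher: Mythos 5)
The paper states this observation without proof, treating all four clauses as routine, so there is no official argument to compare against; your proof is correct and supplies exactly the details being elided. The structural fact you isolate, $Q^\circ=(P^\circ\setminus\precond t)\dcup\postcond t$ with $\precond t\subseteq P^\circ$, is indeed the crux of both (b) and (c), and your constructions for (a) and (c) are the intended ones. Two small points are worth spelling out when you prove that structural fact. First, $\precond t\subseteq P^\circ$ needs, besides your observation that each preplace of $t$ has empty postset \emph{in $P$}, the prior claim that each preplace of $t$ already lies in $\SS_P$ at all: this follows because an initially marked place of $Q$ lies in the support of $\MM_{0,Q}=\MM_{0,P}$, while a place with a predecessor $u\in\TT_P$ must belong to $\postcond u$ as computed in $P$ (otherwise $\pi(\postcond u)=\postcond{\pi(u)}$ would fail in $P$ or in $Q$), and the case $u=t$ is excluded by acyclicity. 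Second, the freshness of $\postcond t$ (disjointness from $\SS_P$) uses $\MM_{0,Q}=\MM_{0,P}$ in the same way: a postplace of $t$ has non-empty preset in $Q$, hence initial marking $0$, hence cannot be a transition-free place of $P$, and cannot lie in $\postcond u$ for $u\in\TT_P$ either. With these two remarks your argument is complete, including the verification in (c) that the extended net satisfies all clauses of \refdf{process} and the chain argument for (d).
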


\begin{lemma}\rm\label{lem-confl}
Let $P,P'\inp\fGR$, and $a,b$ transitions of the underlying net $N\!$.%
\vspace{1pt}

If $P\goesto[a]P'$ and $\widehat P\goesto[\{a,b\}]$ then
$\exists Q,Q'.~P'\goesto[b]Q' \wedge P \goesto[b] Q \goesto[a] Q'$.
\end{lemma}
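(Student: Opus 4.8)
The plan is to prove this confluence (diamond) property by directly constructing the processes $Q$ and $Q'$, rather than appealing only to the abstract bisimulation of \refobs{bisimulation}. The reason is that the statement demands the \emph{same} process $Q'$ be reachable along both paths ($P'\goesto[b]Q'$ and $P\goesto[b]Q\goesto[a]Q'$), and to guarantee a literal coincidence of the two results I must keep control of the concrete conditions consumed and the fresh conditions produced. Note first that this lemma does not invoke the structural-conflict hypothesis at all, in keeping with the remark that it is needed only later in the proof of \refthm{sclargest}.

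First I would push the step-enabledness hypothesis down to the level of conditions. Since $\widehat P = \pi(P^\circ)$ and $\widehat P \goesto[\{a,b\}]$, enabledness of the \emph{step} $\{a,b\}$ gives the multiset inclusion $\precond a + \precond b \subseteq \widehat P = \pi(P^\circ)$. The move $P \goesto[a] P'$ adds to $P$ a single transition $t$ with $\pi(t)=a$; write $R := \precond t \subseteq P^\circ$ for the conditions it consumes, so $\pi(R)=\precond a$. The crucial observation is then that, because $\precond b \subseteq \widehat P - \precond a = \pi(P^\circ\setminus R)$, I can choose a set of conditions $R' \subseteq P^\circ \setminus R$ with $\pi(R')=\precond b$ that is \emph{disjoint} from $R$. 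This disjointness, afforded by the step rather than a merely sequential firing, is exactly what makes the diamond close.

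With $R$ and $R'$ fixed and disjoint, I would build the remaining processes using the \emph{same} fresh names throughout. Let $u$ be a fresh transition with $\pi(u)=b$, with $\precond u := R'$ and $\postcond u$ a fresh set of conditions mapping onto $\postcond b$. Extending $P$ by $u$ gives $Q$ with $P \goesto[b] Q$, legal since $R' \subseteq P^\circ$. Extending $P'$ by the \emph{same} $u$ gives $Q'$ with $P' \goesto[b] Q'$, legal since $R' \subseteq P^\circ\setminus R \subseteq (P')^\circ$ (the conditions of $R'$ are not consumed by $t$). Finally, as $R \subseteq P^\circ\setminus R' \subseteq Q^\circ$, I can extend $Q$ by the very transition $t$ added in $P'$ (with its postplaces $\postcond t$), obtaining $Q \goesto[a] Q'$.

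The single point requiring care is that this last extension of $Q$ yields \emph{literally} $Q'$: both nets have as places those of $P$ together with $\postcond t$ and $\postcond u$, as transitions those of $P$ together with $t$ and $u$, and an identical flow relation, because $t$ and $u$ act on the disjoint condition sets $R$ and $R'$ and introduce independent fresh postplaces. Hence the order of adding $t$ and $u$ is immaterial and the same $Q'$ is obtained both ways. I expect the main obstacle to be pure bookkeeping: verifying $\precond a + \precond b \subseteq \widehat P$ at the multiset level, justifying the selection of $R'$ inside $P^\circ\setminus R$, and confirming that $(P')^\circ$ and $Q^\circ$ really contain the conditions consumed by $u$ and by $t$, respectively.
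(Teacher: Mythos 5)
Your proof is correct and follows essentially the same route as the paper's: the paper likewise derives $\precond{a}+\precond{b}\subseteq\pi(P^\circ)$ from step-enabledness, observes $\pi(P^\circ\setminus P'^\circ)=\precond{a}$, concludes $\precond{b}\subseteq\pi(P^\circ\cap P'^\circ)$, and then asserts the existence of $Q$ and $Q'$. Your version merely makes explicit the construction (the disjoint condition sets $R$, $R'$ and the reuse of the same fresh transition $u$) that the paper leaves implicit in its final sentence.
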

\begin{proof}
Since $\widehat P\goesto[\{a,b\}]$ we have $\precond{a}\mathord+\precond{b}
\subseteq \pi(P^\circ)$.
Furthermore $\pi(P^\circ\setminus P'^\circ)=\precond{a}$.
So $\precond{b} \subseteq \pi(P^\circ \cap P'^\circ)$.
Therefore, there exist $Q$ and $Q'$ as required.
\qed
\end{proof}
The following observations are easy to check. For (b) note that $P\swapeq^*Q$ implies
  $\widehat P = \widehat Q$; also compare \refcor{swapprefix}.
\begin{observation}\rm\label{obs-swaptrans}
Let $P,Q,Q'$ be finite GR-processes of a net $N$.  \vspace{-1ex}
\begin{itemize}
\item[(a)] 
If $P\goesto[a]Q$ and $P\goesto[a]Q'$ then $Q\swapeq^* Q'$.
\item[(b)]
If $P\swapeq^*Q\goesto[a]Q'$ then $P\goesto[a]P'\swapeq^* Q'$ for some $P'\inp\fGR$.
\end{itemize}
\end{observation}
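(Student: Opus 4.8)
I would prove the two parts separately: part~(a) carries the combinatorial content, while part~(b) follows almost formally from \refcor{swapprefix}.

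For part~(a), I would first pin down the shape of a single step. If $P\goesto[a]Q$, then by \refdf{process} and \refdf{extension} $Q$ is obtained from $P$ by adjoining one transition $t$ with $\pi(t)=a$, whose preplaces $\precond t$ form a \emph{set} (each place of a process has at most one outgoing arc) contained in the set $P^\circ$ of maximal places of $P$, together with fresh postplaces realising $\postcond a$. Here one checks that the places of $Q$ not already in $P$ are exactly the postplaces of $t$ (otherwise some $u\in\TT_P$ would violate $\pi(\postcond u)=\postcond{\pi(u)}$ in $P$), and that every preplace of $t$ lies in $P^\circ$ by acyclicity together with $|\postcond p|\le 1$. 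Writing the same decomposition for $P\goesto[a]Q'$ with transition $t'$, the process condition forces $\pi(\precond t)=\precond a=\pi(\precond{t'})$, so for each place $s$ of $N$ the sets $\precond t$ and $\precond{t'}$ select the \emph{same number} $F(s,a)$ of maximal places of $P$ lying over $s$.

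The core step is then to transform $Q$ into an isomorphic copy of $Q'$ by swaps. Since $P$ is literally a common prefix, it suffices to re-route the arcs into $t$ so that $t$ consumes the same maximal places of $P$ as $t'$ does. Whenever $\precond t$ and $\precond{t'}$ disagree over a place $s$, I pick $p\in\precond t\setminus\precond{t'}$ and $q\in\precond{t'}\setminus\precond t$ with $\pi(p)=\pi(q)=s$; both lie in $P^\circ$, so they are $\FF^+$-incomparable ($p$'s only successor is $t$, and $q$ has none), and $\swap(\cdot,p,q)$ of \refdf{swap} is defined and moves the incoming arc of $t$ from $p$ to $q$. Iterating over all mismatches produces $Q''$ in which $t$ consumes exactly $\precond{t'}$; the only remaining difference from $Q'$, namely the names $t$ versus $t'$ and their fresh postplaces, is absorbed by the evident isomorphism fixing $P$, sending $t\mapsto t'$ and matching postplaces over each $s$. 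Thus $Q\swapeq^*Q''\cong Q'$, and $\cong\,\subseteq\,\swapeq^*$ (\refobs{iso-swap}) gives $Q\swapeq^*Q'$. The point requiring care is that each transposition remains a legal swap: since swaps only shuffle the arcs into $t$ among the pairwise-incomparable, equally-labelled places of $P^\circ$, the side-conditions of \refdf{swap} are preserved along the whole chain.

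For part~(b), from $P\swapeq^*Q\goesto[a]Q'$ we have $P\swapeq^*Q\le Q'$, so \refcor{swapprefix} yields some $R$ with $P\le R\swapeq^*Q'$. It remains to check $P\goesto[a]R$. Both $\sim_S$ and $\cong$ preserve the number and the multiset of $\pi$-images of transitions, hence so does $\swapeq^*$; in particular $R$ is finite and has as many transitions as $Q'$, which in turn has exactly one more than $Q$ and hence than $P$. As $P\le R$, the difference $\TT_R\setminus\TT_P$ is a singleton, and comparing the transition-image multisets across $R\swapeq^*Q'$ shows its unique element maps to $a$; thus $P\goesto[a]R$, and taking $P':=R$ completes the proof. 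I expect part~(a) --- establishing the normal form of a step and verifying that swap-validity is maintained throughout the transposition chain --- to be the main obstacle, whereas part~(b) is bookkeeping layered on \refcor{swapprefix}.
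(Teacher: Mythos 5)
Your proof is correct, and it fills in exactly the details the paper leaves implicit: \refobs{swaptrans} is stated there as ``easy to check'' with only the hint to compare \refcor{swapprefix} for part~(b). Your part~(a) is the expected direct argument (new transition with preplaces in $P^\circ$ and fresh postplaces, a chain of swaps among equally-labelled maximal places to align $\precond t$ with $\precond{t'}$, then an isomorphism absorbed via \refobs{iso-swap}), and your part~(b) follows the hinted route through \refcor{swapprefix}, merely replacing the paper's marking-equality remark by an equivalent transition-counting argument to identify the added transition's label.
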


\begin{lemma}\rm\label{lem-cfdiamond}
  Let $N\mathbin=(S,T,F,M_0)$ be a binary-conflict-\!-free net,
  $a,b\inp T$ with $a\mathbin{\neq} b$, and $P,P',Q$ be finite GR-processes of $N$.

  If $P\goesto[a]P'$ and $P\goesto[b]Q$
  then $\widehat P \goesto[\{a,b\}]$ and $\exists Q'. P'\goesto[b]Q' \wedge Q\goesto[a]\swapeq^*Q'$.
\end{lemma}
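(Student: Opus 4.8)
Let me parse Lemma \ref{lem-cfdiamond}. We have a binary-conflict-free net $N$, two distinct transition labels $a \neq b$ in the underlying net, and finite GR-processes $P, P', Q$. Given:
- $P \goesto[a] P'$ (extend $P$ by one transition mapped to $a$, giving $P'$)
- $P \goesto[b] Q$ (extend $P$ by one transition mapped to $b$, giving $Q$)

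I need to show:
1. $\widehat{P} \goesto[\{a,b\}]$ (the step $\{a,b\}$ is enabled at marking $\widehat{P}$)
2. $\exists Q'. P' \goesto[b] Q' \wedge Q \goesto[a] \swapeq^* Q'$

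So after going via $a$ then $b$, and via $b$ then $a$, the results are swapping-equivalent (a "diamond" up to swapping).

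**My proof strategy:**

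First, establish the step is enabled. From $P \goesto[a] P'$, by Observation \ref{obs-bisimulation}(b) we get $\widehat{P} \goesto[a] \widehat{P'}$, so $\precond{a} \subseteq \widehat{P}$. Similarly $P \goesto[b] Q$ gives $\precond{b} \subseteq \widehat{P}$. By Observation \ref{obs-bisimulation}(d), $\widehat{P}$ is reachable. Now here's the key: since $a \neq b$ and both are enabled at the reachable marking $\widehat{P}$, binary-conflict-freeness says $\{a,b\}$ is enabled, i.e., $\widehat{P} \goesto[\{a,b\}]$. This gives part 1.

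Now for the diamond. Having $\widehat{P} \goesto[\{a,b\}]$ and $P \goesto[a] P'$, I can apply Lemma \ref{lem-confl} directly — it gives me $Q_1, Q_1'$ with $P' \goesto[b] Q_1'$ and $P \goesto[b] Q_1 \goesto[a] Q_1'$. This produces the $Q'$ I want (call it $Q_1'$) satisfying $P' \goesto[b] Q'$. But I also have my given $Q$ with $P \goesto[b] Q$, whereas Lemma \ref{lem-confl} produced some $Q_1$ with $P \goesto[b] Q_1$. Since $P \goesto[b] Q$ and $P \goesto[b] Q_1$, Observation \ref{obs-swaptrans}(a) gives $Q \swapeq^* Q_1$.

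**The main obstacle and how I'd close it:**

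The remaining gap is to connect my given $Q$ to the required conclusion $Q \goesto[a] \swapeq^* Q'$. I have $Q \swapeq^* Q_1$ and $Q_1 \goesto[a] Q_1' = Q'$. Applying Observation \ref{obs-swaptrans}(b) to $Q \swapeq^* Q_1 \goesto[a] Q'$ yields some $Q''$ with $Q \goesto[a] Q'' \swapeq^* Q'$. This is exactly $Q \goesto[a] \swapeq^* Q'$, completing the proof. I anticipate the only subtlety is threading the two sources of the intermediate $b$-extension together — the one coming from my hypothesis ($Q$) versus the one manufactured by Lemma \ref{lem-confl} ($Q_1$) — and reconciling them via Observation \ref{obs-swaptrans}(a) before pushing the $a$-transition through with Observation \ref{obs-swaptrans}(b). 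Everything else is a direct chaining of the previously established confluence and bisimulation facts, with the structural conflict hypothesis entering only implicitly through binary-conflict-freeness at the single step where I invoke it to enable $\{a,b\}$.
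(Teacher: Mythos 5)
Your proposal is correct and follows essentially the same route as the paper's own proof: lift to markings via Observation~\ref{obs-bisimulation}(b,d), invoke binary-conflict-freeness to enable $\{a,b\}$, apply Lemma~\ref{lem-confl} to build the diamond, and reconcile the two $b$-successors with Observation~\ref{obs-swaptrans}(a) followed by (b). Only the naming of the intermediate processes differs.
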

\begin{proof}
  Suppose $P\goesto[a]P'$ and $P\goesto[b]Q$ with  $a\neq b$.
  We have $M_0\goesto[] \widehat P$ by \refobs{bisimulation}(d).
  Moreover, $\widehat P \goesto[a] \widehat P'$ and $\widehat P\goesto[b]\widehat Q$
  by \refobs{bisimulation}(b).
  Hence, as $N$ is binary-conflict-\!-free, $\widehat P \goesto[\{a,b\}]$.
  By \reflem{confl} there are $Q',Q''$ with \mbox{$P'\goesto[b]Q'$} and
  $P\goesto[b]Q''\goesto[a]Q'$.
  By \refobs{swaptrans}(a), $Q\swapeq^* Q''$, and hence $Q\goesto[a]\swapeq^* Q'$
  by \refobs{swaptrans}(b).
\qed
\end{proof}

\begin{lemma}[\cite{glabbeek11ipl}]\rm\label{lem-towardsinfinity}
  Let $N$ be a net.

  Every GR-process $\PP$ of $N$ is a prefix of a maximal GR-process of $N$.
\end{lemma}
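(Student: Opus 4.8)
The plan is to produce the maximal extension by a Zorn's lemma argument on the poset of all extensions of $P$, ordered by the prefix relation $\leq$. First I would fix
$\mathcal{E} := \{Q \in \GR \mid P \leq Q\}$, ordered by $\leq$. By \refdf{extension} the relation $\leq$ is reflexive, transitive and antisymmetric, so $(\mathcal{E},\leq)$ is a partial order, and it is nonempty since $P \leq P$. A maximal element $Q^*$ of $\mathcal{E}$ is automatically a maximal GR-process of $N$: any process $R$ with $Q^* \leq R$ satisfies $P \leq Q^* \leq R$, whence $R \in \mathcal{E}$, and maximality in $\mathcal{E}$ forces $R = Q^*$. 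So it suffices to verify the hypothesis of Zorn's lemma, namely that every chain in $\mathcal{E}$ has an upper bound in $\mathcal{E}$ (the empty chain being bounded by $P$).

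Second, given a nonempty chain $\mathcal{C} \subseteq \mathcal{E}$, I would form its union $\bigcup\mathcal{C}$: its places and transitions are the unions of those of the members; its initial marking is their common value $\MM_0$ (common because $P' \leq P$ forces $\MM_0' = \MM_0$); its flow relation assigns to a pair $x,y$ the value $\FF_i(x,y)$ computed in any member $P_i \in \mathcal{C}$ containing both; and $\pi$ is the union of the members' mappings. These data are well defined because $\mathcal{C}$ is totally ordered, so every finite set of elements of $\bigcup\mathcal{C}$ — in particular every pair $x,y$ — already lies in a single member, on which all the restricting data agree. I would then check that $\bigcup\mathcal{C}$ is a GR-process in the sense of \refdf{process}: the local place conditions $|\precond{s}|\leq 1 \geq |\postcond{s}|$, the initial-marking clause and the two mapping clauses are inherited verbatim from the members; acyclicity holds because a cycle in $\FF^+$ is finite and would therefore already live in a single member, contradicting its acyclicity. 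Finally $\bigcup\mathcal{C}$ extends every $P_i$ directly from the construction, hence lies above $P$ and is the desired upper bound in $\mathcal{E}$.

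The main obstacle is the finiteness part of \refdf{process}: that each transition has finitely many pre- and postplaces and that $\{t \in \TT \mid (t,u)\in\FF^+\}$ is finite for every $u$, since a priori a transition could accumulate new neighbours or new causal predecessors as one climbs the chain. The key observation is that this never happens: once a transition $t$ belongs to a member $P_i$, its entire preset and postset are already frozen there. For the preset this is the downward-closure of prefixes under causality noted after \refdf{extension} (every $\FF^+$-predecessor of $t$ already lies in $P_i$); for the postset it follows from the second mapping clause of \refdf{process}, which forces $\pi(\postcond{t}) = \postcond{\pi(t)}$ in each member and so determines $\postcond{t}$ completely inside $P_i$ (a prefix cannot drop a postplace of one of its transitions without violating this clause). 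The same downward-closure shows that the causal predecessors of any $u$ are exactly those it has in the member where it first appears, so both finiteness conditions — and the nonemptiness of presets — are inherited by $\bigcup\mathcal{C}$. With these observations the verification in the second step goes through, and Zorn's lemma delivers the required maximal extension $Q^* \geq P$.
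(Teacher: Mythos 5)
Your proposal is correct and follows essentially the same route as the paper: the paper's proof is exactly a Zorn's-Lemma argument on the set of extensions of $\PP$ ordered by $\leq$, with chains bounded by componentwise union. Your additional verifications (well-definedness of the union, acyclicity via finiteness of cycles, and the observation that pre- and postsets of a transition are already frozen in the first member containing it) are the details the paper leaves implicit, and they are all sound.
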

\begin{proof}
  The set of all processes of $N$ of which $\PP$ is a prefix
  is partially ordered by $\leq$.
  Every chain in this set has an upper bound, obtained
  by componentwise union.
  Via Zorn's Lemma this set contains at least one maximal process.
  \qed
\end{proof}
Since the set of GR-processes of $N$ is non-empty by \refobs{bisimulation}(a),
this implies that each net has a maximal GR-process.

\begin{theorem}\rm\label{thm-sclargest}
  Let $N$ be a conflict-free structural conflict net.

  Then $N$ has a $\sqsubseteq_1^\infty$-largest BD-process.
\end{theorem}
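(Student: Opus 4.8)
The plan is to exhibit a concrete $\sqsubseteq_1^\infty$-largest BD-process, namely $\BDinf{P^*}$ for an arbitrarily chosen \emph{maximal} GR-process $P^*$ of $N$, which exists by \reflem{towardsinfinity} together with \refobs{bisimulation}(a). Since $N$ is a conflict-free structural conflict net it is binary-conflict-\!-free (Section~\ref{sec-conflict}), so the diamond lemma \reflem{cfdiamond} is available throughout. Unfolding \refdf{BD-swapping-alt}, to show $\BDinf{P^*}$ is the largest element of the induced partial order it suffices to prove $Q \sqsubseteq_1^\infty P^*$ for every $Q\inp\GR$, i.e.\ that every finite prefix $Q''$ of $Q$ can be \emph{covered}, meaning $Q'' \leq U \swapeqi^* W \leq P^*$ for some $U,W\inp\fGR$. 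As every finite prefix of a process is itself a finite GR-process, and each finite GR-process is reached from an initial process by a sequence $P_0 \goesto[a_1] \cdots \goesto[a_n] P_n$, it is enough to show that \emph{every} finite GR-process is covered, which I would do by induction on the length of such a sequence.

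For the base case, an initial process is isomorphic to the initial prefix of $P^*$, hence $\swapeqi^*$-equivalent to it by \refobs{iso-swap}, so it is covered. The inductive step reads: if $R$ is covered, say $R \leq U \swapeqi^* W \leq P^*$, and $R \goesto[a] R_1$, then $R_1$ is covered. This rests first on a purely diamond-theoretic \emph{commuting} fact, proved by induction on $|U|-|R|$ using \reflem{cfdiamond} and \refobs{swaptrans}(a): from $R \goesto[a] R_1$ and $R \leq U$ one obtains \emph{either} $R_1 \leq V \swapeqi^* U' \leq U$ (the $a$-step is absorbed into $U$) \emph{or} $U \goesto[a] U_1$ with $R_1 \leq V \swapeqi^* U_1$ (the $a$-step survives on top of $U$), for some finite $V$. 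In the first case, chaining with $U \swapeqi^* W \leq P^*$ and repeatedly applying \refcor{swapprefix} together with transitivity of $\leq$ and $\swapeqi^*$ immediately collapses the chain into the form required for coverage. In the second case, transporting the surviving $a$-step across $U \swapeqi^* W$ via \refobs{swaptrans}(b) yields $W \goesto[a] W_1$ with $W \leq P^*$, and it remains to cover $W_1$.

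This reduces everything to the key lemma, which is the only place where \refdf{structuralconflict} itself and the maximality of $P^*$ are used: \emph{if $W \leq P^*$ is a finite prefix and $W \goesto[a] W_1$, then $W_1$ is covered by $P^*$}. I would prove it by fixing a linearization $W \goesto[b_1] W_{+1} \goesto[b_2] \cdots$ of $P^*$ continuing past $W$ (possible since $W$'s transitions are causally closed, and by discreteness every transition of $P^*$ has a finite past, so all may be enumerated), and commuting the $a$-step forward along it with \reflem{cfdiamond}, maintaining the invariant $W_{+i} \goesto[a] Q'_i$ with $W_1 \leq V_i \swapeqi^* Q'_i$. At each stage either $a$ coincides with the next $b_{i+1}$, whereupon \refobs{swaptrans}(a) merges the two and delivers $W_1 \leq V_i \swapeqi^* W_{+(i+1)} \leq P^*$, establishing coverage; or $a \neq b_{i+1}$, and the diamond closes with $\widehat{W_{+i}} \goesto[\{a,b_{i+1}\}]$ at a reachable marking by \refobs{bisimulation}(d), letting the commutation continue.

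The genuinely delicate part is ruling out that $a$ commutes forward \emph{forever} without merging. Here \refdf{structuralconflict} enters: $\widehat{W_{+i}} \goesto[\{a,b_{i+1}\}]$ at a reachable marking forces $\precond{a} \cap \precond{b_{i+1}} = \emptyset$, whence the process-transition realizing $b_{i+1}$ consumes none of the specific preplaces of the $a$-transition fired from $W$. If $a$ never merged this would hold for every $b_i$, and as $W$'s own transitions do not touch those preplaces either, they would lie in the end $(P^*)^\circ$, never consumed by $P^*$; one could then append an $a$-transition on them to obtain a proper extension of $P^*$, contradicting its maximality. Hence $a$ merges after finitely many steps and $W_1$ is covered. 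Assembling the induction shows every finite GR-process is covered, so $Q \sqsubseteq_1^\infty P^*$ for all $Q$ and $\BDinf{P^*}$ is the $\sqsubseteq_1^\infty$-largest BD-process. I expect the bookkeeping of the two commuting inductions---carrying the auxiliary processes and collapsing the $\swapeqi^*/\!\leq$ chains through \refcor{swapprefix}---to be the most laborious part, and the non-termination argument resting on maximality and \refdf{structuralconflict} to be the main obstacle.
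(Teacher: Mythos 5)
Your proposal is correct and follows essentially the same route as the paper's proof: fix a maximal GR-process $P^*$ (via \reflem{towardsinfinity}), use \reflem{cfdiamond} and \refobs{swaptrans} to commute each newly added transition up to a swapping-equivalent extension of a prefix of $P^*$, and invoke the structural-conflict-net condition together with maximality of $P^*$ to rule out that the transition commutes forward forever --- the paper merely packages your outer induction on firing sequences as a contradiction with a $\leq$-minimal non-covered finite prefix $Q_0$, stripping off a $\FF^+$-maximal transition instead of adding one. The only point to tighten is your single global linearization $W\goesto[b_1]W_{+1}\goesto[b_2]\cdots$ of $P^*$: since nets and hence processes need not be countable, such a sequence need not exhaust $P^*$, so you should argue as the paper does --- for each transition $u$ of $P^*$ outside $W$ separately, push the $a$-step along a finite linearization of the causal past of $u$ to conclude $\precond{a}\cap\precond{\pi_{P^*}(u)}=\emptyset$ --- or note that only the finitely many transitions of $P^*$ consuming the relevant end-places of $W$ matter and that these have finite pasts.
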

\begin{proof}
  Let $P$ be a maximal GR-process of $N$---it exists by \reflem{towardsinfinity}.
  We show that $\BDinf{P}$ is the $\sqsubseteq_1^\infty$-largest BD-process of $N$,
  i.e., for each GR-process $Q$ of $N$ one has $Q \sqsubseteq_1^\infty P$.
  This proof is illustrated in \reffig{sclargest}.

  Let $\BDify{P}$ be as defined in \refsec{characterisation}. As remarked there, using \refcor{swapprefix},
  $\BDify{P} = \{P''' \mathbin\in \fGR \mid \exists P'',P'\mathbin\in\fGR. P''' \leq P'' \swapeqi^* P' \leq P\}$.

  Towards a contradiction, suppose $Q \not\sqsubseteq_1^\infty P$ for some $Q\in\GR$.
  Then, by \refdf{BD-swapping-alt}, there is a finite prefix $Q''$ of $Q$ with $Q''\notin\BDify{P}$.
  Let $Q_0$ be a minimal
  such prefix w.r.t.\ the prefix order $\leq$ of \refdf{extension}.
  $Q_0$ can be written as $((\SS, \TT, \FF, \MM_0), \pi)$.
  Since all initial processes of $N$ are isomorphic, each initial process of $N$ is in $\BDify{P}$.
  Hence $Q_0$ must have a transition.

\begin{figure}[H]
\definecolor{lsA}{rgb}{0.0,0.0,0.0}\def\lsA{[color=lsA,text=black,line width=1pt]}
\definecolor{lsB}{rgb}{0.4,0.0,0.4}\def\lsB{[color=lsB,text=black,line width=1pt]}
\definecolor{lsC}{rgb}{0.2,0.0,1.0}\def\lsC{[color=lsC,text=black,line width=1pt]}
\definecolor{lsD}{rgb}{0.0,0.2,0.8}\def\lsD{[color=lsD,text=black,line width=1pt]}
\definecolor{lsE}{rgb}{0.0,0.6,0.5}\def\lsE{[color=lsE,text=black,line width=1pt]}
\definecolor{lsF}{rgb}{0.0,0.8,0.2}\def\lsF{[color=lsF,text=black,line width=1pt]}
\definecolor{lsG}{rgb}{0.2,1.0,0.2}\def\lsG{[color=lsG,text=black,line width=1pt]}
\definecolor{lsH}{rgb}{0.6,0.8,0.0}\def\lsH{[color=lsH,text=black,line width=1pt]}
\definecolor{lsI}{rgb}{1.0,0.6,0.0}\def\lsI{[color=lsI,text=black,line width=1pt]}
\definecolor{lsJ}{rgb}{1.0,0.0,0.0}\def\lsJ{[color=lsJ,text=black,line width=1pt]}

\begin{center}
\begin{tikzpicture}
  \draw
    (6, -4.5) node(Q) {$Q$}
    (-2, -4.5) node(P) {$P$}
    (4, -18) node(Q0) {$Q_0$}
    (0, -18) node(Q0p) {$Q_0'$}
    (0, -12) node(P0p) {$P_0'$}
    (-2, -14) node(Qp) {$Q'$}
    (0, -16) node(Q1p) {$Q_1'$}
    (0, -14) node(Qnp) {$Q_n'$}
    (2, -19) node(t) {$+t$}
    (4, -16) node(Q1) {$Q_1$}
    (4, -14) node(Qn) {$Q_n$}
    (4, -12) node(P0) {$P_0$}
    (0, -10) node(P1p) {$P_1'$}
    (0, -6) node(Pm1p) {$P_{m+1}'$}
    (4, -10) node(P1) {$P_1$}
    (4, -6) node(Pm1) {$P_{m+1}$}
    (0, -8) node(Pmp) {$P_{m}'$}
    (4, -8) node(Pm) {$P_{m}$}
    ;

  \draw \lsA (Q) -- (P) node [sloped,midway,fill=white] {$\not\sqsupseteq_1^\infty$} ;
  \draw \lsB (Q0) to[out=0,in=270] (6, -16) -- (Q) node [sloped,midway,fill=white] {$\leq$} ;
  \draw \lsC (Q0p) to[out=330,in=180] (t) ;
  \draw \lsC (t) to[out=0,in=210] (Q0) ;
  \draw \lsD (Q0p) to[out=180,in=270] (-2, -16) -- (Qp) node [sloped,midway,fill=white] {$\leq$} ;
  \draw \lsD (Qp) -- (P0p) node [sloped,midway,fill=white] {$\swapeq^*$} ;
  \draw \lsD (P0p) to[out=180,in=270] (-2, -10) -- (P) node [sloped,near end,fill=white] {$\leq$} ;
  \draw \lsE (Q0p) -- (Q1p) node [sloped,midway,fill=white] {$\goesto[a_1]$} ;
  \draw \lsE (Qp) -- (Qnp) node [sloped,midway,fill=white] {$=$} ;
  \draw \lsE (Q1p) -- (Qnp) node [sloped,midway,fill=white] {$\cdots$} ;
  \draw \lsF (Q0p) -- (Q0) node [sloped,midway,fill=white] {$\goesto[b]$} ;
  \draw \lsF (t) -- (2, -18.2) node [sloped,midway,fill=white] {$\pi$} ;
  \draw \lsG (Q0) -- (Q1) node [sloped,midway,fill=white] {$\goesto[a_1]\swapeq^*$} ;
  \draw \lsG (Q1) -- (Qn) node [sloped,midway,fill=white] {$\cdots$} ;
  \draw \lsG (Q1p) -- (Q1) node [sloped,midway,fill=white] {$\goesto[b]$} ;
  \draw \lsG (Qnp) -- (Qn) node [sloped,midway,fill=white] {$\goesto[b]$} ;
  \draw \lsH (P0p) -- (P0) node [sloped,midway,fill=white] {$\goesto[b]$} ;
  \draw \lsH (Qn) -- (P0) node [sloped,midway,fill=white] {$\swapeq^*$} ;
  \draw \lsI (P0p) -- (P1p) node [sloped,midway,fill=white] {$\goesto[c_0]$} ;
  \draw \lsI (Pmp) -- (Pm1p) node [sloped,midway,fill=white] {$\goesto[c_m]$} ;
  \draw \lsI (Pm1p) -- (P) node [sloped,midway,fill=white] {$\geq$} ;
  \draw \lsI (P1p) -- (Pmp) node [sloped,midway,fill=white] {$\cdots$} ;
  \draw [->] \lsI (-2,-7) -- (-0.4,-7) node [sloped,midway,fill=white] {$\pi_P$} ;
  \draw \lsJ (P1p) -- (P1) node [sloped,midway,fill=white] {$\goesto[b]$} ;
  \draw \lsJ (Pmp) -- (Pm) node [sloped,midway,fill=white] {$\goesto[b]$} ;
  \draw \lsJ (Pm1p) -- (Pm1) node [sloped,midway,fill=white] {$\goesto[b]$} ;
  \draw \lsJ (P0) -- (P1) node [sloped,midway,fill=white] {$\goesto[c_0]\swapeq^*$} ;
  \draw \lsJ (P1) -- (Pm) node [sloped,midway,fill=white] {$\cdots$} ;
  \draw \lsJ (Pm) -- (Pm1) node [sloped,midway,fill=white] {$\goesto[c_m]\swapeq^*$} ;
\end{tikzpicture}
\end{center}
\caption{Illustration of the proof of \refthm{sclargest}.\label{fig-sclargest}}
\end{figure}

  Let $t$ be a maximal element in $\TT$ with respect to \plat{$\FF^+$}.
  Then $Q_0 \upharpoonright (\TT \setminus \{t\}) =: Q'_0$ is a process
  and $Q'_0 \mathbin{\in} \BDify{P}$.
  Hence there exists finite $P'_0, Q'$ such that $Q'_0 \leq Q' \swapeq^* P'_0 \leq P$.
  Moreover, there are $Q'_1,\ldots,Q'_n\in \BDify{P}$ and transitions $a_1\ldots,a_n$ of $N$
  with $Q'_n=Q'$ and $Q'_{i-1}\goesto[a_i]Q'_i$ for $i=1,\ldots,n$.

  $\pi(t)$ is some transition $b$ of $N$, so $Q'_0 \goesto[b] Q_0$.
  We now show by induction on $i\in\{1,\ldots,n\}$ that there are
  $Q_1,\ldots,Q_n\in \fGR\mathop{\setminus}\BDify{P} $
  with $Q'_i\goesto[b]Q_i$ and \mbox{$Q_{i-1}\goesto[a_i]\swapeq^* Q_i$} for $i\mathbin=1,\ldots,n$.
  Namely, given $Q_{i-1}$, as $Q_{i-1}\mathbin{\not\in}\BDify{P}$ we have
  $Q_{i-1}\not\swapeq^* Q'_i\in\BDify{P}$.
  Using that $Q'_{i-1} \goesto[a_i] Q'_i$ and $Q'_{i-1} \goesto[b] Q_{i-1}$,
  this implies $a_i\mathbin{\neq} b$ by \refobs{swaptrans}(a).
  Now \reflem{cfdiamond} yields a $Q_i\inp\fGR$ such that 
  $Q'_{i} \goesto[b] Q_{i}$ and $Q_{i-1} \goesto[a_i]\swapeq^* Q_i$.
  As $\BDify{P}$ is $\swapeq^*$- and prefix-closed, we have $Q_i\mathbin{\not\in}\BDify{P}$.

  Since $Q'_n\swapeq^* P'_0$ and $Q'_n\goesto[b] Q_{n}$,
  there is a $P_0\in\fGR$ with $P'_0\goesto[b] P_0$ and $P_0\swapeq^*Q_{n}$, using \refobs{swaptrans}(b).
  Hence $P_0\not\in\BDify{P}$.

  Now let $u$ be any transition in $P:=(\NN,\pi_P)$ that is not included in $P'_0$.
  Then there are $P'_1,\ldots,P'_{m+1} \leq P$ with
  $P'_{i}\goesto[c_i]P'_{i+1}$ for $i=0,\ldots,m$ and $c_m=\pi_P(u)$.
  Exactly as above, by induction on $i$, $b \mathbin{\neq} c_i$ for $i=0,\ldots,m$ and there are
  $P_1,\ldots,P_{m+1}\in \fGR\mathop{\setminus}\BDify{P}$\vspace{2pt}
  with $P'_{i+1}\goesto[b]P_{i+1}$ and \mbox{$P_{i}\goesto[c_i]\swapeq^{*} P_{i+1}$} for $i=0,\ldots,m$.
  Moreover, since \mbox{$P'_{m}\goesto[c_{m}]P'_{m+1}$} and $P'_{m}\goesto[b]$,
  we have \plat{$\widehat P'_m \goesto[\{c_m,b\}]$} by \reflem{cfdiamond}.
  By \refobs{bisimulation}(d) we furthermore have \plat{$M_0 \goesto[] \widehat
  P'_{m}$}, where $N=:(S,T,F,M_0)$.
  Hence, as $N$ is a structural conflict net, $\precond{b}\cap\precond{c_m}=\emptyset$.

  Since $\widehat P'_0 \supseteq \precond{b}$, by \refobs{bisimulation}(b), and the
  tokens in the preplaces of $b$ cannot be consumed by the $\pi_P$-image of any
  transition of $P$ that fires after $P'_0$ has been executed, $P$ can be extended
  with the transition $b$, and hence is not maximal.
  This is the required contradiction.
  \qed
\end{proof}

\section[Unique maximal GR-processes up to swapping equivalence]
        {Unique maximal GR-processes up to $\swapeq^\infty$}
\label{sec-maximality}

Together, Theorems~\ref{thm-conflictfree} and~\ref{thm-sclargest} say that
a structural conflict net has a $\sqsubseteq_1^\infty$-largest BD-process iff it is conflict-free.
The ``only if'' direction stems essentially from \cite{glabbeek11ipl}, and ``if'' is contributed here.

Since the preorder $\sqsubseteq_1^\infty$ was not employed in \cite{glabbeek11ipl},
there we did not consider $\sqsubseteq_1^\infty$-largest BD-processes.
Instead, we spoke of a ``unique maximal GR-process up to $\swapeq^\infty$'',
using the notion of maximality from \refdf{extension}, that is,
maximality w.r.t.\ the prefix order $\leq$ between GR-processes.
The following propositions compare $\sqsubseteq_1^\infty$-maximality and $\leq$-maximality.

\begin{proposition}\rm\label{pr-maxorder}
  Let $N$ be a net and $P$ a process thereof.

  If $\BDinf{P}$ is $\sqsubseteq_1^\infty$-maximal then some $Q\in\BDinf{P}$ is maximal.
\end{proposition}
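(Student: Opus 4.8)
The plan is to show that the $\sqsubseteq_1^\infty$-maximality of the BD-process $\BDinf{P}$ forces some $\leq$-maximal GR-process to lie inside the equivalence class $\BDinf{P}$, and the natural candidate is a $\leq$-maximal extension of $P$ itself. So the strategy is: produce a $\leq$-maximal GR-process above $P$, check that it sits $\sqsubseteq_1^\infty$-above $\BDinf{P}$, and then use maximality to collapse the two BD-processes into one.

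Concretely, I would proceed in three steps. First, I would apply \reflem{towardsinfinity} to obtain a $\leq$-maximal GR-process $P^*$ of $N$ with $P \leq P^*$. Second, I would observe that $P \leq P^*$ already yields $P \sqsubseteq_1^\infty P^*$: unfolding \refdf{BD-swapping-alt}, for every finite prefix $P''\inp\fGR$ with $P'' \leq P$ one may simply take $P' := P''$ and $Q' := P''$, since then $P'' \leq P'' \swapeqi^* P'' \leq P^*$ using reflexivity of $\swapeqi^*$ on the finite process $P''$ and transitivity of $\leq$ through $P \leq P^*$. Hence $\BDinf{P} \sqsubseteq_1^\infty \BDinf{P^*}$ in the induced partial order on BD-processes. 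Third, since $\BDinf{P}$ is assumed $\sqsubseteq_1^\infty$-maximal, this relation cannot be strict, so $\BDinf{P} = \BDinf{P^*}$, that is $P \swapeq^\infty P^*$. Therefore $P^* \in \BDinf{P}$, and $P^*$ is a $\leq$-maximal GR-process, which is exactly the $Q$ required by the statement.

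I do not expect a genuine obstacle here: the only substantive ingredient is the existence of $\leq$-maximal extensions, and that is precisely \reflem{towardsinfinity}, established earlier via Zorn's Lemma. Everything else is definition-chasing. The one point that warrants care is the step $P \leq P^* \implies P \sqsubseteq_1^\infty P^*$, where I must make sure to exhibit explicit witnesses $P',Q'\inp\fGR$ for the existential quantifier in \refdf{BD-swapping-alt} and to invoke reflexivity of $\swapeqi^*$ (which holds on nonempty processes by the remark following \refdf{swapeq}, and trivially for the reflexive-transitive closure in general); this justifies the diagonal choice $P'=Q'=P''$. I would also be explicit that ``$\sqsubseteq_1^\infty$-maximal'' refers to the induced partial order on BD-processes described in the footnote to \refcor{swapprefix-preorder}, so that $\BDinf{P}\sqsubseteq_1^\infty\BDinf{P^*}$ together with maximality indeed gives equality of the two classes rather than merely a one-sided inequality.
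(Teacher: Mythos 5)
Your proposal is correct and follows essentially the same route as the paper's own proof: obtain a $\leq$-maximal extension via \reflem{towardsinfinity}, note that $P \leq P^*$ gives $P \sqsubseteq_1^\infty P^*$ by \refdf{BD-swapping-alt}, and conclude by $\sqsubseteq_1^\infty$-maximality that $P^* \in \BDinf{P}$. Your explicit witnesses $P'=Q'=P''$ for the step $P \leq P^* \implies P \sqsubseteq_1^\infty P^*$ are exactly the detail the paper leaves implicit.
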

\begin{proof}
  Assume $\BDinf{P}$ is $\sqsubseteq_1^\infty$-maximal.
  By \reflem{towardsinfinity} there exists some maximal $Q$ with $P \leq Q$.
  By \refdf{BD-swapping-alt}, $P \leq Q$ implies $P \sqsubseteq_1^\infty Q$.
  Since $\BDinf{P}$ is $\sqsubseteq_1^\infty$-maximal we have $Q \swapeq^\infty P$ and $Q$ is a
  maximal process within $\BDinf{P}$.
  \qed
\end{proof}

\begin{figure}
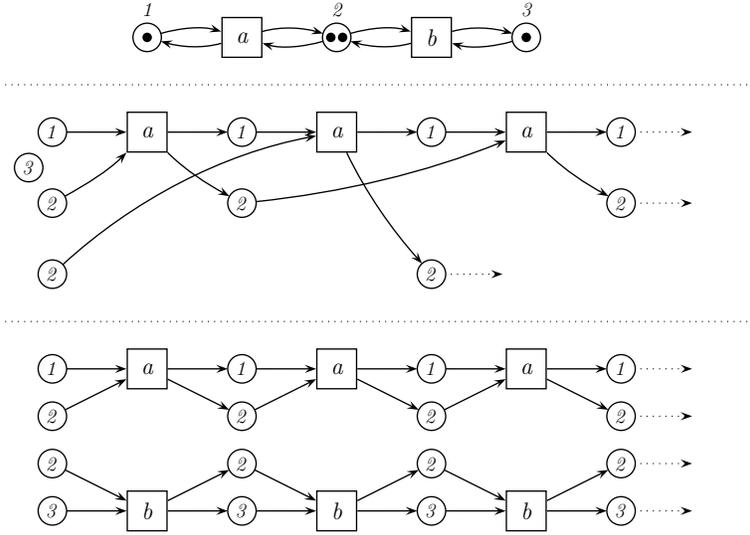

  \begin{center}
    \psscalebox{0.9}{
    \begin{petrinet}(16,11)
      \Qt(3,10):p:1;
      \t(5,10):a:a;
      \qt(7,10):q:2;
      \t(9,10):b:b;
      \Qt(11,10):r:3;
      \A p->a; \A a->p;
      \A q->a; \A a->q;
      \A q->b; \A b->q;
      \A r->b; \A b->r;
      \pscircle*(6.875,10){0.1}
      \pscircle*(7.125,10){0.1}

      \psline[linestyle=dotted](0,9)(16,9)

      \s(0.5,7.25):r1:3;
      \s(1,8):p1:1;
      \s(1,6.5):q1:2;
      \s(1,5):q2:2;

      \t(3,8):a1:a;
      \t(7,8):a2:a;
      \t(11,8):a3:a;

      \s(5,8):p2:1;
      \s(9,8):p3:1;
      \s(13,8):p4:1;

      \s(5,6.5):q3:2;
      \s(9,5):q4:2;
      \s(13,6.5):q5:2;

      \a p1->a1; \a a1->p2; \a p2->a2; \a a2->p3; \a p3->a3; \a a3->p4;
      \B q1->a1; \B a1->q3; \B q3->a3; \B a3->q5;
      \A q2->a2; \B a2->q4;

      \psline[linestyle=dotted]{->}(13.3,8)(14.5,8)
      \psline[linestyle=dotted]{->}(13.3,6.5)(14.5,6.5)
      \psline[linestyle=dotted]{->}(9.3,5)(10.5,5)

      \psline[linestyle=dotted](0,4)(16,4)

      \s(1,3):pb1:1;
      \s(1,2):qb1:2;
      \s(1,1):qb2:2;
      \s(1,0):rb1:3;

      \t(3,3):ab1:a;
      \t(7,3):ab2:a;
      \t(11,3):ab3:a;

      \t(3,0):bb1:b;
      \t(7,0):bb2:b;
      \t(11,0):bb3:b;

      \s(5,3):pb2:1;
      \s(9,3):pb3:1;
      \s(13,3):pb4:1;

      \s(5,2):qb3:2;
      \s(9,2):qb4:2;
      \s(13,2):qb5:2;

      \s(5,1):qb6:2;
      \s(9,1):qb7:2;
      \s(13,1):qb8:2;

      \s(5,0):rb2:3;
      \s(9,0):rb3:3;
      \s(13,0):rb4:3;

      \a pb1->ab1; \a ab1->pb2; \a pb2->ab2; \a ab2->pb3; \a pb3->ab3; \a ab3->pb4;
      \a qb1->ab1; \a ab1->qb3; \a qb3->ab2; \a ab2->qb4; \a qb4->ab3; \a ab3->qb5;
      \a qb2->bb1; \a bb1->qb6; \a qb6->bb2; \a bb2->qb7; \a qb7->bb3; \a bb3->qb8;
      \a rb1->bb1; \a bb1->rb2; \a rb2->bb2; \a bb2->rb3; \a rb3->bb3; \a bb3->rb4;

      \psline[linestyle=dotted]{->}(13.3,3)(14.5,3)
      \psline[linestyle=dotted]{->}(13.3,2)(14.5,2)
      \psline[linestyle=dotted]{->}(13.3,1)(14.5,1)
      \psline[linestyle=dotted]{->}(13.3,0)(14.5,0)

    \end{petrinet}
    }
  \end{center}
  \caption{A net and two maximal GR-processes thereof.}
  \label{fig-weakmaximal}
\vspace{5ex}
\end{figure}

\noindent
The reverse of \refpr{maxorder} does not hold.
The first process depicted in \reffig{weakmaximal} cannot be extended, for none of the tokens
in place 2 will in the end come to rest. So it is maximal.
Yet, it is not $\sqsubseteq_1^\infty$-maximal.
For it is swapping equivalent with the top half of the second process (using only one
of the tokens in place 2), which can be extended with the bottom half.

\begin{proposition}\rm\label{pr-unique-max}
  Let $N$ be a net and $P$ a process thereof.

  If $P$ is the only maximal process up to $\swapeq^\infty$, then it is the
  $\sqsubseteq_1^\infty$-largest process.
\end{proposition}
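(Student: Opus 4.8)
The plan is to unfold the definition of ``$\sqsubseteq_1^\infty$-largest'' and reduce the claim to a short chain of relations. Concretely, I want to show that for an \emph{arbitrary} GR-process $Q$ of $N$ one has $Q \sqsubseteq_1^\infty P$; this is exactly what it means for $\BDinf{P}$ to be the $\sqsubseteq_1^\infty$-largest BD-process. The three facts I intend to lean on are: (i) every process extends to a maximal one (\reflem{towardsinfinity}); (ii) the prefix order is contained in $\sqsubseteq_1^\infty$, i.e. $Q \leq Q'$ implies $Q \sqsubseteq_1^\infty Q'$ directly from \refdf{BD-swapping-alt}, as already noted in the proof of \refpr{maxorder}; and (iii) $\swapeq^\infty$ is the kernel of $\sqsubseteq_1^\infty$, so that $Q' \swapeq^\infty P$ in particular yields $Q' \sqsubseteq_1^\infty P$.

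First I would fix an arbitrary $Q \in \GR$ and apply \reflem{towardsinfinity} to obtain a maximal GR-process $Q'$ with $Q \leq Q'$. By fact (ii) this gives $Q \sqsubseteq_1^\infty Q'$. Next I would invoke the hypothesis that $P$ is the only maximal process up to $\swapeq^\infty$: since $Q'$ is maximal, this forces $Q' \swapeq^\infty P$, and hence $Q' \sqsubseteq_1^\infty P$ by fact (iii). Finally, transitivity of the preorder $\sqsubseteq_1^\infty$ (\refcor{swapprefix-preorder}) combines $Q \sqsubseteq_1^\infty Q'$ and $Q' \sqsubseteq_1^\infty P$ into $Q \sqsubseteq_1^\infty P$. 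As $Q$ was arbitrary, $P$ is $\sqsubseteq_1^\infty$-largest.

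I do not expect any genuine obstacle here: the argument is purely a matter of assembling results already established in \refsec{semantics}. The only point worth being careful about is conceptual rather than technical, namely that the uniqueness hypothesis is stated for the \emph{symmetric} equivalence $\swapeq^\infty$ while the conclusion concerns the \emph{asymmetric} preorder $\sqsubseteq_1^\infty$; one should note explicitly that we use only the single direction $Q' \sqsubseteq_1^\infty P$ of $Q' \swapeq^\infty P$, which is exactly the half of the kernel relation that points the right way. This proposition is essentially the converse companion to \refpr{maxorder}, and together with it one obtains that, for the processes of a net, having a unique maximal GR-process up to $\swapeq^\infty$ is equivalent to possessing a $\sqsubseteq_1^\infty$-largest BD-process.
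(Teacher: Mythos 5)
Your argument is correct and is essentially identical to the paper's own proof: extend $Q$ to a maximal $Q'$ via \reflem{towardsinfinity}, use $Q \leq Q' \Rightarrow Q \sqsubseteq_1^\infty Q'$, invoke uniqueness to get $Q' \swapeq^\infty P$, and conclude by transitivity. The remark about using only the relevant half of the kernel relation is a nice point of care but not a departure from the paper's route.
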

\begin{proof}
Let $P$ be the only maximal process of $N$ up to $\swapeq^\infty$, and $Q$ any other process of $N$.
Let $Q'$ be a maximal process with $Q\leq Q'$---it exists by \reflem{towardsinfinity}.
Using \refdf{BD-swapping-alt}, $Q\leq Q'$ trivially implies $Q\sqsubseteq_1^\infty Q'$.
Since $P$ is the only maximal process up to $\swapeq^\infty$, we have $Q' \swapeq^\infty P$.
Thus $Q \sqsubseteq_1^\infty P$, showing that $P$ is the $\sqsubseteq_1^\infty$-largest process of $N$.
  \qed
\end{proof}

\cite[Corollary 1]{glabbeek11ipl} says that if a structural conflict net $N$ has only one maximal
GR-process up to $\swapeq^\infty$ then $N$ is conflict-free. Using \refpr{unique-max} this is a
weakening of \refthm{conflictfree}. We now establish the converse, that a conflict-free structural
conflict net has only one maximal GR-process up to $\swapeq^\infty$; this is a strengthening of \refthm{sclargest}.

\begin{theorem}\rm\label{thm-scmaximals}
  Let $N$ be a conflict-free structural conflict net.

  Then $N$ has a unique maximal GR-process up to $\swapeq^\infty$.
\end{theorem}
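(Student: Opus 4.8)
The plan is to deduce the statement from \refthm{sclargest} by observing that a greatest element of a partial order, if it exists, is unique. Recall that ``unique maximal GR-process up to $\swapeq^\infty$'' asserts two things: that a maximal GR-process exists, and that any two maximal GR-processes belong to one and the same $\swapeq^\infty$-class.

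For existence, I would argue that $\mathcal{P}_0(N)\neq\emptyset$ by \refobs{bisimulation}(a), so $N$ has at least one GR-process; by \reflem{towardsinfinity} this process is a prefix of a maximal GR-process, which therefore exists.

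The crux is the following strengthening of \refthm{sclargest}, obtained simply by rereading its proof: that proof begins with \emph{an arbitrary} maximal GR-process $P$ and concludes that $Q\sqsubseteq_1^\infty P$ for every GR-process $Q$ of $N$. Hence, for \emph{every} maximal GR-process $P$, the BD-process $\BDinf{P}$ is $\sqsubseteq_1^\infty$-largest. Given this, I would take any two maximal GR-processes $P$ and $Q$: applying the strengthened statement to $P$ gives $Q\sqsubseteq_1^\infty P$, and applying it to $Q$ gives $P\sqsubseteq_1^\infty Q$. Since $\swapeq^\infty$ is the kernel of the preorder $\sqsubseteq_1^\infty$ (\refcor{swapprefix-preorder}), these two inequalities together yield $P\swapeq^\infty Q$. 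Thus all maximal GR-processes fall into a single $\swapeq^\infty$-class, completing the proof.

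I expect no genuine obstacle, as the argument is essentially the uniqueness of a greatest element. The one point deserving care is the claim that the proof of \refthm{sclargest} applies verbatim to an arbitrary maximal GR-process rather than to one constructed in a special way; inspection confirms that $P$ is introduced there merely as ``a maximal GR-process''. One should resist the tempting shortcut of trying to prove directly that every $\leq$-maximal GR-process generates a $\sqsubseteq_1^\infty$-maximal BD-process: as the remark following \refpr{maxorder} together with \reffig{weakmaximal} illustrates, this fails for general nets, and re-establishing it for conflict-free structural conflict nets would require invoking the structural-conflict hypothesis afresh, so reusing the proof of \refthm{sclargest} is the cleaner route.
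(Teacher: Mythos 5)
Your proposal is correct and coincides with the paper's own argument: the paper likewise observes that the proof of \refthm{sclargest} establishes $\BDinf{P}$ as $\sqsubseteq_1^\infty$-largest for \emph{any} maximal GR-process $P$, and then applies this to two maximal processes $P$ and $Q$ to conclude $P \swapeq^\infty Q$ via the kernel of the preorder. Your additional remarks on existence and on why one should not argue via the converse of \refpr{maxorder} are sound but not needed.
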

\begin{proof}
  Let $P$ and $Q$ be two maximal GR-processes of $N$.
  The proof of \refthm{sclargest} shows that $\BDinf{P}$ is the $\sqsubseteq_1^\infty$-largest BD-process of $N$,
  and the same holds for $Q$. So $Q \sqsubseteq_1^\infty P$ and $P \sqsubseteq_1^\infty Q$, i.e.,
  $Q \swapeq^\infty P$.
\qed
\end{proof}
Thus we obtained, for structural conflict nets $N$, that
$N$ is conflict-free iff
$N$ has a $\sqsubseteq_1^\infty$-largest BD-process,
iff $N$ has a unique maximal GR-process up to $\swapeq^\infty$.
In our technical report \cite{GGS11a} we moreover show that for structural conflict nets the
converse of \refpr{maxorder} holds: if $P\in\GR$ is maximal, then $\BDinf{P}$ is
$\sqsubseteq_1^\infty$-maximal---see Lemma~12. Consequently, for structural conflict nets also the
converse of \refpr{unique-max} holds. So for structural conflict nets there is no difference between
a $\sqsubseteq_1^\infty$-largest BD-process and a unique maximal GR-process up to $\swapeq^\infty$.

\section{Conclusion}
We defined a \emph{BD-process} as an equivalence class of Goltz-Reisig
processes under the swapping equivalence proposed by Best and
Devillers, and argued that on the subclass of structural conflict nets
BD-processes constitute a fully satisfactory concept of abstract
process of a Petri net under the collective token interpretation.  To
justify that assessment we showed that a structural conflict net is
conflict-free iff it has a largest BD-process.

In the technical report belonging to \cite{GGS11a} we strengthen the result obtained here
by showing that each countable net without binary conflict (even if not a structural conflict net)
has a largest BD-process. However, proving this is much more complicated than the results presented here.

We leave as an open question to consider also branching time semantics.
The notion of a process for condition/event systems was adapted to a
branching time semantics of nets through the concept of an
\emph{unfolding} of Nielsen, Plotkin {\AND} Winskel \cite{nielsen81petri}. 
Unfolding a net results in an occurrence net with forward branched places
that captures all runs of the net, together with the branching
structure of choices between them.
This work was adapted by Engelfriet in \cite{engelfriet91branchingprocesses}
to P\hspace{-1pt}/T systems without arc weights, and
Meseguer, Sassone {\AND} Montanari extended this to cover arc weights as
well \cite{MMS97}. The resulting occurrence nets have one branch for
every maximal GR-process of the underlying net.
It is an open question whether such a construction can be adapted to
the collective token interpretation of Petri nets, so that an
unfolding of a net has one branch for every BD-process, and thus
remains unbranched in case of conflict-free nets.

\bibliographystyle{eptcsalpha}

\end{document}